\newtheorem{lemma}{Lemma}
\newcommand{\RNum}[1]{\uppercase\expandafter{\romannumeral #1\relax}}
\def\BibTeX{{\rm B\kern-.05em{\sc i\kern-.025em b}\kern-.08em
    T\kern-.1667em\lower.7ex\hbox{E}\kern-.125emX}}
\begin{document}

\title{On Stochastic Performance Analysis of Secure Integrated
Sensing and Communication Networks}
\author{\IEEEauthorblockN{Marziyeh Soltani, Mahtab Mirmohseni, \textit{Senior Member, IEEE}, and Rahim Tafazolli, \textit{Fellow, IEEE}}  
\IEEEauthorblockA{\textit{5/6GIC, Institute for Communication Systems, University of Surrey}, Guildford, U.K \\  
Emails: \{m.soltani, m.mirmohseni, r.tafazolli\}@surrey.ac.uk}  
}

\maketitle
\begin{abstract}
This paper analyzes the stochastic security performance of a multiple-input multiple-output (MIMO) integrated sensing and communication (ISAC) system in a downlink scenario. A base station (BS) transmits a multi-functional signal to simultaneously communicate with a user, sense a target’s angular location, and counteract eavesdropping threats. The system includes a passive single-antenna communication eavesdropper and a multi-antenna sensing eavesdropper attempting to infer the target’s location. The BS-user and BS-eavesdroppers channels follow Rayleigh fading, while the target's azimuth angle is uniformly distributed. To evaluate the performance, we derive exact expressions for the secrecy ergodic rate and the ergodic Cramér-Rao lower bound (CRB) for target localization at both the BS and the sensing eavesdropper. This involves computing the probability density functions (PDFs) of the signal-to-noise ratio (SNR) and CRB, leveraging the central limit theorem for tractability. Numerical results validate our findings.
\end{abstract}
\section{Introduction}\label{introduction}
Recently, integrated sensing and communications (ISAC) has emerged as a key paradigm for future wireless networks, enabling applications such as vehicular communication and smart city infrastructure \cite{Integratedtoward}. By sharing resources like spectrum and hardware, ISAC enhances both sensing and communication (S \& C). As 6G ISAC systems handle increasingly complex S \& C data exchanges, security and privacy concerns have intensified. Traditional security methods, including encryption, struggle with ISAC’s latency-sensitive nature and potential leakage of transmission parameters \cite{IntegratingSensingandCommunicationsin6G}. This has driven interest in physical layer security (PLS), which provides real-time protection without complex key management.  

Compared to traditional systems that only handle communication, ISAC systems face more complex security issues. These challenges mainly fall into two categories: \textit{keeping communication data secure} and \textit{sensing security in radar-like sensing} \cite{SecuringtheSensingFunctionalityinISACNetworks}. The first challenge, comes from the fact that ISAC systems use waveforms that carry information. These waveforms can be picked up by the targets being sensed, which might act as eavesdroppers and access sensitive information. To mitigate this risk, \cite{SecureIntegratedSensingandCommunication} uses information-theoretic approaches, investigating both the inner and outer bounds of the secrecy-distortion region. Also, in \cite{JointSecureTransmitBeamformingDesignsforIntegratedSensingandCommunicationSystems,PhysicalLayerSecurityOptimizationWithCramér–RaoBoundMetric,Sensing-AssistedEavesdropperEstimation:AnISACBreakthroughinPhysicalLayerSecurity,OptimalBeamformingforsecureIntegratedSensingandCommunicationExploitingTargetLocation}, various beamforming techniques improve PLS by minimizing eavesdroppers’ signal quality while preserving radar performance. Additionally, covert ISAC is studied in \cite{RobustBeamformingDesignforCovertIntegratedSensingandCommunication}. 

\textit{Sensing security} in ISAC is another concern, as adversaries can intercept reflected sensing signals to infer system activities \cite{OnRadarPrivacyinSharedSpectrumScenarios}. To mitigate this, approaches such as mutual information constraints \cite{SecuringtheSensingFunctionalityinISACNetworks} and beamforming optimization \cite{SecureCell-FreeIntegratedSensingandCommunicationinthePresenceofInformation} have been proposed to degrade adversarial sensing capabilities. Furthermore, \cite{IllegalSensingSuppressionforIntegratedSensingandCommunicationSystem} aims to protect target privacy by reducing the adversary’s detection accuracy.  

Existing ISAC security solutions often rely on nonconvex optimization, requiring iterative algorithms like successive convex approximation (SCA) and semidefinite relaxation (SDR), which lack closed-form beamforming solutions \cite{SecureDual-FunctionalRadar-CommunicationTransmission:ExploitingInterference, JointSecureTransmitBeamformingDesignsforIntegratedSensingandCommunicationSystems}. Motivated by this, we propose a precoding matrix with a closed-form expression. The intuition behind this matrix is that it reduces to the optimal beamforming solution in the absence of security concerns (i.e., no eavesdropper); and also to the optimal beamforming solution in the absence of sensing requirements (i.e., no target). Then, we use this beamforming and focus on deriving the performance metrics.  

While random ISAC networks without security constraints have been studied \cite{OnStochasticFundamentalLimitsinaDownlink}, to the best of our knowledge, this is the first paper to analyze ISAC communication and sensing security while considering channel randomness. We consider a downlink multiple-input multiple-output (MIMO) ISAC system where a multi-antenna base station (BS) transmits a precoding matrix with artificial noise to:  
1. Deliver communication data to a single-antenna user,  
2. Sense a target’s angular location via echo signals, and  
3. Prevent a passive, single-antenna eavesdropper from intercepting user communications through wiretap coding. Additionally, we account for target privacy by considering a multi-antenna sensing eavesdropper extracting the target’s angular information. The BS-user and BS-eavesdropper channels follow independent Rayleigh fading models, while the target's azimuth angle is uniformly distributed.\footnote{Our approach applies to any target angle distribution.} Our main contribution is the derivation of exact expressions for the ergodic secrecy rate, as well as a closed-form expression for the ergodic Cramér-Rao Bound (CRB), defined as \( E[\text{CRB}] \), for the target’s angle of arrival at the BS and the sensing eavesdropper\footnote{The CRB provides a lower bound on the mean squared error (MSE) of unbiased estimators.}. We have previously showen in \cite{OnStochasticFundamentalLimitsinaDownlink} that the CRB is a tighter bound than the Bayesian CRB (BCRB)\footnote{These metrics were highlighted as a future research direction in \cite{recentadvancesandtenopenchallenges}.}, making it a crucial metric for ISAC security and privacy analysis in random networks.

\textbf{Notation:} Scalars, vectors, and matrices are denoted by lowercase, boldface lowercase, and boldface uppercase letters, respectively. \( P(\cdot) \), \( f_x(\cdot) \), and \( E[\cdot] \) represent probability, PDF, and expectation. \( \mathbf{X}^{M \times N} \), \( \mathbf{X}^T \), \( \mathbf{X}^H \), and \( \mathbf{X}^* \) denote an \( M \times N \) matrix, its transpose, Hermitian transpose, and conjugate. The Euclidean norm is \( \|\cdot\| \), and the complex norm is \( |\cdot| \). \( \mathcal{CN}(\cdot,\cdot) \) and \( \mathcal{U}(a,b) \) denote circularly symmetric complex Gaussian and uniform distributions, respectively. \( \mathcal{N}_3(\boldsymbol{\mu}, \boldsymbol{\Sigma}) \) is a trivariate normal distribution with mean \( \boldsymbol{\mu} \) and covariance \( \boldsymbol{\Sigma} \). \( \mathbb{C} \) and \( \mathbb{R} \) denote complex and real number sets. \( \overset{d}{\rightarrow} \) and \( \overset{p}{\rightarrow} \) indicate convergence in distribution and probability. \( \text{Tr}(A) \), \( \otimes \), \( \text{vec} \), and \( \mathbf{I}_m \) represent the trace of \( A \), Kronecker product, vectorization, and the \( m \times m \) identity matrix.
\section{System Model and Transmission Strategy}\label{systemmodel}
We consider a BS with \(N\) transmit and \(M\) receive antennas serving a single-antenna communication user in the downlink. The BS knows the user’s channel and simultaneously senses a distant point target with an unknown location with a known distribution. Using a monostatic radar setup, the BS colocates its estimator and transmit antennas, ensuring identical direction of arrival (DoA) and departure (DoD). Two eavesdroppers are present:  
1. A single-antenna communication eavesdropper passively overhearing transmissions, with an unknown channel to the BS.
2. A sensing eavesdropper with (\(N_e\)) receive antennas, estimating the target’s position, threatening the privacy, and assumed to be strong with knowledge of the BS’s transmitted data. The system model is shown in Fig. \ref{systemmodelfig}\footnote{The case of weak eavesdropper is left for future work.}.
\begin{figure}
    \includegraphics[scale=.63]{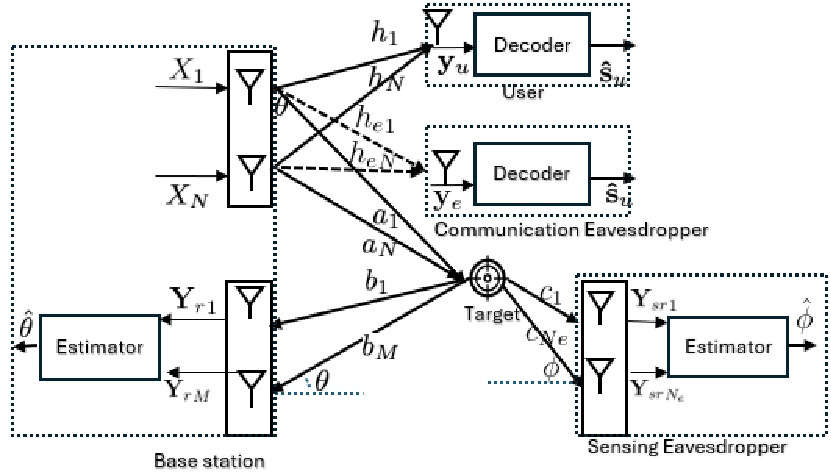}\caption{System model.${\mathbf{Y}_{sr}}_{i}$ and $h_{ei}$ denotes the $i$-th row and element of $\mathbf{Y}_{sr}$ and $\mathbf{h}_{e}$, respectively} \label{systemmodelfig}
\end{figure}

\textbf{Channels:} The channel vector between the BS and the user is given by \(\mathbf{h}=[h_1\quad h_2 \dots h_N]^T \in \mathbb{C}^{N \times 1}\), where the elements are independent and identically distributed (i.i.d.) according to \(\mathcal{CN} (0,1)\). More precisely, the \(i\)-th element can be expressed as \({h}_i = |{h}_i|e^{j{{\phi_i}}}\), where \(|{h}_i|\) follows a Rayleigh distribution with a scale parameter of \(1\), and \({{\phi_i}}\) is uniformly distributed over \([0, 2\pi)\). Furthermore, assuming an even number of antennas, the transmit and receive array steering vectors from the BS to the target are shown as follows:  
\begin{align}
\mathbf{a}^{N \times 1}(\theta) & \! \!= \! \![e^{-j\pi \sin(\theta)\frac{N-1}{2}}, e^{-j\pi \sin(\theta)\frac{N-3}{2}}, \dots, e^{j\pi \sin(\theta)\frac{N-1}{2}} ]^T,\nonumber\\
\mathbf{b}^{M \times 1}(\theta) & \! \!=  \! \![ e^{-j\pi \sin(\theta)\frac{M-1}{2}}, e^{-j\pi \sin(\theta)\frac{M-3}{2}}, \dots, e^{j\pi \sin(\theta)\frac{M-1}{2}}]^T,\nonumber
\end{align}
where \(\theta\) represents the azimuth angle of the target relative to the BS, which follows $\mathcal{U}(-\pi/2,\pi/2)$. The \(i\)-th element of \(\mathbf{a}(\theta)\) is expressed as \(a_i=e^{-jf_i}\), where \( f_i=\pi \sin(\theta)\frac{N-(2i-1)}{2} \). Moreover, the channel vector between the BS and the communication eavesdropper is given by \(\mathbf{h}_e \sim\mathcal{CN}(0,\mathbf{I}) \in \mathbb{C}^{N \times 1}\). The received steering vector at the sensing eavesdropper\footnote{For simplicity, we ignore the link between the BS and the sensing eavesdropper.} is given by: $\mathbf{c}^{N_e \times 1}(\phi) =[ e^{-j\pi \sin(\phi)\frac{N_e-1}{2}}, e^{-j\pi \sin(\phi)\frac{N_e-3}{2}}, \dots, e^{j\pi \sin(\phi)\frac{N_e-1}{2}}]^T$, 
where \(\phi\) represents the azimuth angle of the target relative to the sensing eavesdropper, which follows $\mathcal{U}(-\pi/2,\pi/2)$.
\textbf{ISAC Transmission Strategy:} Optimizing beamforming for secure ISAC lacks a closed-form solution, as shown in \cite{PhysicalLayerSecurityOptimizationWithCramér–RaoBoundMetric,Sensing-AssistedEavesdropperEstimation:AnISACBreakthroughinPhysicalLayerSecurity,OptimalBeamformingforsecureIntegratedSensingandCommunicationExploitingTargetLocation}. However, two special cases allow closed-form solutions: 1) No eavesdroppers (no security concerns): The optimal beamforming vector minimizing the CRB for target estimation—while ensuring a maximize communication rate under a power constraint—lies in the span of \(\{\mathbf{a}, \mathbf{h}\}\), as shown in \cite[Lemma 1]{CrameRaoBoundOptimizationforJoint}.  
2) No target (no sensing task) \& unknown eavesdropper’s CSI: When the eavesdropper’s CSI is unavailable, secrecy performance metrics include the ergodic secrecy rate and secrecy outage-based secrecy rate \cite{Artificial-Noise-AidedBeamformingDesignintheMISOME}. Regarding secrecy ergodic rate, a near-optimal strategy is the masked beamforming method \cite{SecureTransmissionWithMultipleAntennasI}, where the beamforming vector is based only on the legitimate user’s channel. In this method, the transmitter Sends the message (encoded with a scalar Gaussian wiretap code) along the user’s channel to maximize the signal power. Also, it injects spatio-temporal white noise into the null space of the user’s channel, avoiding interference while degrading the eavesdropper’s reception. Regarding secrecy outage, artificial noise (AN) distributed uniformly in the null space of the information beam is proven optimal \cite{SecrecyOutageinMISOSystemsWithPartialChannelinformation}.

Motivated by this discussion, we propose adopting an AN-based transmit beamforming structure that, while suboptimal, has a closed-form expression. Proposing a structured transmit beamformer, which is a weighted combination of two different precoders, is a novel technique for analyzing the performance of ISAC, especially when non-convex problems result in a non-closed-form and computationally challenging solution for the optimal beamformer, as seen in \cite{PowerAllocationforMassiveMIMO-ISACSystems} for massive MIMO-ISAC. The BS transmits:
\begin{align}
\mathbf{X}=\sqrt{P\tau}\mathbf{t}_1 \mathbf{s}_u+ \sqrt{P(1-\tau)}\mathbf{G} \mathbf{V} \in \mathbb{C}^{N\times L},\label{x}
\end{align}
where the first term is optimal without considering security \cite[Lemma 1]{CrameRaoBoundOptimizationforJoint}, while the second term, combined with part of the first term’s power, is optimal when sensing is disregarded\cite{SecureTransmissionWithMultipleAntennasI,SecrecyOutageinMISOSystemsWithPartialChannelinformation}. \(L > N\) represents the length of the radar pulse or communication frame; \(\mathbf{s}_u \in \mathbb{C}^{1 \times L}\) is a white Gaussian signaling data stream for the user with unit power, satisfying \(\frac{1}{L}E\{\mathbf{s}_u\mathbf{s}_u^H\} \approx 1\) when \(L\) is sufficiently large \cite{CrameRaoBoundOptimizationforJoint}. The matrix \(\mathbf{V} \in \mathbb{C}^{(N-2) \times L}\) represents an artificial noise (AN), whose elements are i.i.d. complex Gaussian, such that each row, denoted by \(\mathbf{v}_i\), satisfies \(\frac{1}{L}\mathbf{v}_i\mathbf{v}_i^H={\frac{1}{N-2}}\). We assume that the AN signals and data \(\mathbf{s}_u\) are orthogonal, i.e., \(\mathbf{s}_u\mathbf{v}_i^H=0\). The vectors \(\mathbf{t}_1\) and \(\mathbf{G}\) are constructed as follows. First, we construct an orthonormal basis for the \(N\)-dimensional space. The first basis vector is aligned with the target channel: \(\tilde{\mathbf{a}}=\frac{\mathbf{a}}{||\mathbf{a}||}\). The second basis vector is \(\tilde{\mathbf{h}}=\frac{\mathbf{h}-(\tilde{\mathbf{a}}^H \mathbf{h})\tilde{\mathbf{a}}}{||\mathbf{h}-(\tilde{\mathbf{a}}^H \mathbf{h})\tilde{\mathbf{a}}||}\). We choose \(\mathbf{t}_1\) to be a vector in the span of \(\tilde{\mathbf{a}}\) and \(\tilde{\mathbf{h}}\), i.e., \(\mathbf{t}_1= \alpha \tilde{\mathbf{a}} + \beta \tilde{\mathbf{h}}\). Next, defining the matrix \(\mathbf{A}= \begin{bmatrix} \mathbf{\tilde{a}} & \mathbf{\tilde{h}} \end{bmatrix} \in \mathbb{C}^{N \times 2}\), an orthonormal basis for the null space of \( \mathbf{A}^H \), i.e., all vectors \( \mathbf{x} \) such that \( \mathbf{A}^H \mathbf{x} = 0 \), is derived as follows. We compute the singular value decomposition (SVD) \(\mathbf{A}= \mathbf{U} \mathbf{\Sigma} \mathbf{\tilde{W}}^H\). Then, the last \(N - 2\) columns of \( \mathbf{U} \), denoted by \(\mathbf{t}_3,...,\mathbf{t}_N\), form the basis for the null space of \(\mathbf{A}^H\). The matrix \(\begin{bmatrix} \mathbf{\tilde{a}} & \mathbf{\tilde{h}} & \mathbf{G} \end{bmatrix}\), where \(\mathbf{G}=\begin{bmatrix} \mathbf{t}_3 & ... & \mathbf{t}_N \end{bmatrix}\), constructs an orthonormal basis for the \(N\)-dimensional space. The sample covariance matrix, due to the orthogonality of \(\mathbf{v}_i\) and \(\mathbf{s}_u\), is:  
\begin{align}
\mathbf{R}_x=\frac{1}{L}\mathbf {X}\mathbf{X}^H\approx 
P\tau\mathbf{t}_1\mathbf{t}^H_1+\frac{(1-\tau)P}{N-2}\sum_{i=3}^{N}\mathbf{t}_i\mathbf{t}^H_i.\label{Rx}
\end{align}  
We assume that the BS power is \( p_t \); Thus: $p_t= \text{Tr}(R_x) = P\tau\text{Tr}(\mathbf{t}_1\mathbf{t}_1^H) + \frac{(1-\tau)P}{N-2} \text{Tr} \left( \sum_{i=3}^{N} \mathbf{t}_i\mathbf{t}_i^H \right) \overset{(a)}{=} P.$ Here, (a) follows from the unitary property of \(\mathbf{U}\) and the orthonormality of \(\mathbf{\tilde{a}}\) and \(\mathbf{\tilde{h}}\), ensuring that \( P \) represents the BS transmit power, and \( \tau \) denotes the fraction allocated to $\mathbf{s}_u$.  

\textbf{ Received signals:} The received signals at the user and the communication eavesdropper are given by \( \mathbf{y}_u = c_1\mathbf{h}^H \mathbf{X} + \mathbf{z}_u \) and \( \mathbf{y}_{e} = c_2\mathbf{h}_e^H \mathbf{X} + \mathbf{z}_e \), respectively, where \( \mathbf{z}_u \) and \( \mathbf{z}_e \in \mathbb{C}^{1 \times L} \) are additive white Gaussian noise (AWGN) vectors, with each element following the distribution \( \mathcal{CN} (0,\sigma^2) \)\footnote{For simplicity, we assume the noise terms have the same variance.}. Here, \( c_1 \) and \( c_2 \) denote the complex-valued path gains (functions of distance) from the BS to the user and the eavesdropper, respectively. When the BS transmits \( \mathbf{X} \) to sense the target, the BS and the sensing eavesdropper receive the reflected echo signal matrices as $\mathbf{Y}_r = c_3 \mathbf{b}(\theta)\mathbf{a}(\theta)^H \mathbf{X} + \mathbf{Z}_r$ and $\mathbf{Y}_{sr} = c_4 \mathbf{c}(\phi)\mathbf{a}(\theta)^H \mathbf{X} + \mathbf{Z}_{sr},$
respectively, where \( c_3, c_4 \in \mathbb{C} \) are the complex-valued channel coefficients, which depend on the target’s radar cross-section (RCS) and the round-trip path loss between the target-BS and target-sensing eavesdropper, respectively. \( \mathbf{Z}_r \in \mathbb{C}^{M \times L} \) and \( \mathbf{Z}_{sr} \in \mathbb{C}^{N_e \times L} \) are AWGN matrices, with elements that are i.i.d. and follow the distribution \( \mathcal{CN} (0,\sigma^2_r) \).

\textbf{Metrics:} The metric used to evaluate sensing performance at the BS is \( \text{CRB}(\theta) \). To evaluate privacy, we use \( \text{CRB}(\phi) \), which indicates how accurately the sensing eavesdropper can estimate the target's angle relative to itself. Secure communication performance is assessed using the secrecy rate, which requires deriving the rate for the user, \( R \), and the leakage to the eavesdropper, \( R_e \). Given the random nature of the channels, \( \text{CRB}(\theta) \), \( \text{CRB}(\phi) \), \( R \), and \( R_e \) are all random variables, underscoring the need for suitable performance metrics for analyzing random networks. To address this, we use the ergodic CRB, defined as \( E[\text{CRB}(\theta)] \) at the BS, and \( E[\text{CRB}(\phi)] \) at the sensing eavesdropper. In \cite{OnStochasticFundamentalLimitsinaDownlink}, we have shown that ergodic CRB serves as a lower bound for the estimation error and provides a tighter bound compared to Bayesian or deterministic CRBs. Moreover, since the channel of the sensing eavesdropper is unknown to the BS, the ergodic secrecy rate is \cite{SecureTransmissionWithArtificialNoiseOverFadingChannels}: $C_s = (E[R]-E[R_e])^+
=(E_{\mathbf{h},\mathbf{a}(\theta)} \left[ \log(1 + \text{SINR}_u) \right] - E_{\mathbf{h},\mathbf{a}(\theta),\mathbf{h_e}} \left[ \log(1 + \text{SINR}_e) \right])^+.$
\section{ CRB analysis}\label{crbanalysis}
We first derive the complementary cumulative distribution function (CCDF) of the CRBs, namely \( P(\text{CRB}(\theta) > \epsilon) \) and \( P(\text{CRB}(\phi) > \epsilon) \), which serve as the foundation for deriving \( E[\text{CRB}(\theta)] \) and \( E[\text{CRB}(\phi)] \), respectively. 
\subsection{CCDF of $\text{CRB}(\theta)$ at the BS}\label{opoftargetlb}
To derive the CRB of $\theta$, which is our sensing parameter of interest to be estimated at the BS, we use the received echo signal at the BS which is $\mathbf{Y}_{r} = c_3 \mathbf{b}(\theta)\mathbf{a}(\theta)^H \mathbf{X} + \mathbf{Z}_{r},$ where $\mathbf{X}=\sqrt{P\tau}\mathbf{t}_1 \mathbf{s}_u+ \sqrt{P(1-\tau)}\mathbf{G} \mathbf{V} = \sqrt{P\tau}\mathbf{t}_1 \mathbf{s}_u+ \sqrt{P(1-\tau)}\sum_{i=3}^{N}\mathbf{t}_i\mathbf{v}_i\in \mathbb{C}^{N\times L}$, and we obtain the Fisher information matrix (FIM) for estimating \( \xi=[\theta,\mathcal{R}(c_3),\mathcal{I}(c_3)]^T \in \mathbb{R}^{3 \times 1} \). Let \( \mathbf{A}(\theta) =\mathbf{b}(\theta) \mathbf{a}(\theta)^T \), the received echo signal at the BS can be rewritten as $\mathbf{Y}_{r} = c_3 \mathbf{A}(\theta) \mathbf{X} + \mathbf{Z}_{r}$. By vectorizing $\mathbf{Y}_{r}$, we have: $\tilde{\mathbf{y}}_{r} = \text{vec}(\mathbf{Y}_{r}) = \tilde{\mathbf{u}} + \tilde{\mathbf{n}},$ where \( \tilde{\mathbf{u}} = c_3 \text{vec}(\mathbf{A}\mathbf{X}) \) and \( \tilde{\mathbf{Z}_{r}} = \text{vec}(\mathbf{Z}_{r}) \sim \mathcal{CN}(0, \sigma^2_r\mathbf{I}_{ML}) \). Thus, we have: $\tilde{\mathbf{y}}_{r}\sim \mathcal{CN}(\tilde{\mathbf{u}}, \sigma^2_r\mathbf{I}_{ML})$. Let \( {\mathbf{F}} \in \mathbb{R}^{3 \times 3} \) denote the FIM for estimating $\xi$ based on $\tilde{\mathbf{y}}_{r}$. Each element of \( {\mathbf{F}} \) is given by \cite{Fundamentalsofstatisticalsignalprocessing}:
\begin{equation}
{\mathbf{F}}_{i,j} = \text{tr} \left\{ \mathbf{R}^{-1} \frac{\partial \mathbf{R}}{\partial \xi_i} \mathbf{R}^{-1} \frac{\partial \mathbf{R}}{\partial \xi_j} \right\}
+ 2 \Re \left\{ \frac{\partial \tilde{\mathbf{u}}^H}{\partial \xi_i} \mathbf{R}^{-1} \frac{\partial \tilde{\mathbf{u}}}{\partial \xi_j} \right\},\label{elemntoffisher}
\end{equation}
for $i,j \in \{1,2,3\}$ where $\mathbf{R}$ is the covariance matrix of the Gaussian observation which in our case is $\sigma^2_r\mathbf{I}_{ML}$. By defining $\bar{\alpha}=[\mathcal{R}(c_3),\mathcal{I}(c_3)]^T \in \mathbb{R}^{2 \times 1}$, the FIM \( {\mathbf{F}} \) is partitioned as: ${\mathbf{F}} = 
\begin{bmatrix}
{\mathbf{F}}_{\theta\theta} & {\mathbf{F}}_{\theta\bar{\alpha}} \\
{\mathbf{F}}_{\bar{\alpha}{\theta}} & {\mathbf{F}}_{\bar{\alpha}\bar{\alpha}}
\end{bmatrix}.$
The covariance matrix \( \mathbf{R} \) is independent of \( \xi \). Therefore, we have \( \frac{\partial \mathbf{R}}{\partial \xi_i} = 0 \), \( i = 1,2,3 \) and the first term in (\ref{elemntoffisher}) is zero. Furthermore, $\frac{\partial \tilde{\mathbf{u}}}{\partial \theta} = c_3 \text{vec}(\mathbf{\dot{A}} \mathbf{X})+c_3 \text{vec}(\mathbf{A} \mathbf{\dot{X}})$ and $\frac{\partial \tilde{\mathbf{u}}}{\partial \bar{\alpha}} = [1, j] \otimes \text{vec}(\mathbf{A} \mathbf{X})$. Accordingly, the elements of ${\mathbf{F}}$ are given in (\ref{elemntoffisher2}), where \( j = \sqrt{-1} \), and \( \dot{\mathbf{A}} = \frac{\partial \mathbf{A}}{\partial \theta} \) and \( \dot{\mathbf{X}} = \frac{\partial \mathbf{X}}{\partial \theta} \) denote the partial derivative of \( \mathbf{A} \) and $\mathbf{X}$ w.r.t. \( \theta \), respectively. In (\ref{elemntoffisher2}), (a) follow from the identities \((\text{vec}(A))^H \text{vec}(B) = \text{trace}(A^H B)\) and \(\text{tr}(ABC) = \text{tr}(CAB)\). Next, we derive the CRB for estimating $\theta$, which corresponds to the first diagonal element of \( {\mathbf{F}}^{-1} \), i.e.,
\begin{equation}
\text{CRB}(\theta) = [{\mathbf{F}}^{-1}]_{1,1} = \left[{\mathbf{F}}_{\theta\theta} - {\mathbf{F}}_{\theta\tilde{\alpha}}{\mathbf{F}}_{\tilde{\alpha}\tilde{\alpha}}^{-1} {\mathbf{F}}_{\tilde{\alpha}\theta} \right]^{-1},\label{crbphi}
\end{equation}
where ${\mathbf{F}}_{\theta\tilde{\alpha}}=({\mathbf{F}}_{\tilde{\alpha}\theta})^T$ since the FIM is a symmetric matrix. It is important to note that we do not use the Bayesian Cramér-Rao Bound (BCRB), which requires considering the distribution of the random parameter when deriving the elements of the FIM \footnote{At BCRB, we have: $\text{CRB}(\theta) = [{E[\mathbf{F}}_{\theta\theta}] + E[\frac{\partial \ln p_\theta(\theta)}{\partial \theta} \frac{\partial \ln p_\theta(\theta)}{\partial \theta^\text{T}} ]$-

$E[{\mathbf{F}}_{\theta\tilde{\alpha}}](E[{\mathbf{F}}_{\tilde{\alpha}\tilde{\alpha}}])^{-1} E[{\mathbf{F}}_{\tilde{\alpha}\theta}]]^{-1}$.}. Instead, we treat the CRB—derived for a fixed parameter—as a function of the random variables \(\mathbf{h}\), \(\theta\) and $\phi$. When the distributions of these variables are known, as in our case, we can evaluate \(E(\text{CRB})\) and \(P(\text{CRB})\) to analyze the behavior of the CRB. This approach of treating the CRB as a random variable is utilized in \cite{cramerraoboundonaerospaceandelectronicsystems, recentinsightsinthebayesiananddeterministic}.

To derive \( \dot{\mathbf{X}} = \frac{\partial \mathbf{X}}{\partial \theta} \) and, consequently, to obtain the element of the FIM in (\ref{elemntoffisher2}), there are two approaches:  1) Common Approximation: This approach, commonly used in the literature \cite{MIMOIntegratedSensingandCommunicationCRBRateTradeoff, CrameRaoBoundOptimizationforJoint, PhysicalLayerSecurityOptimizationWithCramér–RaoBoundMetric}, assumes a target-tracking stage of radar sensing where the target moves slowly enough for the BS to obtain rough estimates of the parameters \( c_3 \) and \( \theta \) from the previous stage \cite{CrameRaoBoundOptimizationforJoint, MIMOIntegratedSensingandCommunicationCRBRateTradeoff}. Based on these estimates, the BS designs \( \mathbf{X} \), assuming it to be fixed. Consequently, \( \dot{\mathbf{X}} = \frac{\partial \mathbf{X}}{\partial \theta} = 0 \) \cite[Appendix C]{TargetDetectionandLocalization};  2) Exact Derivation: In this approach, \( \mathbf{X} \) is not treated as a fixed parameter but as a function of random parameter \( \mathbf{h} \) and \( \theta \).  In the following, we derive \( P(\text{CRB}(\theta)>\epsilon) \) using both approaches. 
\begin{figure*}[t]
\normalsize
\begin{align}
{\mathbf{F}}_{\theta\theta} &= \frac{2|c_3|^2}{\sigma_R^2} \Re \left\{ ( \text{vec}(\mathbf{\dot{A}} \mathbf{X}))^H \text{vec}(\mathbf{\dot{A}} \mathbf{X})+( \text{vec}(\mathbf{\dot{A}} \mathbf{X}))^H \text{vec}(\mathbf{A} \mathbf{\dot{X}})+( \text{vec}(\mathbf{A} \mathbf{\dot{X}}))^H \text{vec}(\mathbf{\dot{A}} \mathbf{X})+( \text{vec}(\mathbf{A} \mathbf{\dot{X}}))^H \text{vec}(\mathbf{A} \mathbf{\dot{X}}) \right\}\nonumber\\
&\overset{a}{=}\frac{2 |c_3|^2}{\sigma_R^2}\left\{ L\text{tr}(\mathbf{\dot{A}} \mathbf{R}_x \mathbf{\dot{A}}^H)+\text{tr}(\mathbf{{A}} \mathbf{\dot{X}} \mathbf{{X}}^H\mathbf{\dot{A}}^H)+\text{tr}(\mathbf{{\dot{A}}} \mathbf{{X}} \mathbf{{\dot{X}}}^H\mathbf{{A}}^H)+\text{tr}(\mathbf{{A}} \mathbf{\dot{X}} \mathbf{{\dot{X}}}^H\mathbf{{A}}^H),\right\}\nonumber\\
{\mathbf{F}}_{\theta\bar{\alpha}} &= \frac{2}{\sigma_R^2} \Re \left\{ (c_3^* \text{vec}(\mathbf{\dot{A}} \mathbf{X})^H+c_3^* \text{vec}(\mathbf{{A}} \mathbf{\dot{X}})^H) [1, j] \otimes \text{vec}(\mathbf{A} \mathbf{X}) \right\}\overset{a}{=}\frac{2}{\sigma_R^2} \Re \left\{ Lc_3^* (\text{tr}(\mathbf{A} \mathbf{R}_x \mathbf{\dot{A}}^H)+\text{tr}(\mathbf{A} \mathbf{X}\mathbf{\dot{X}}^H \mathbf{A}^H) )[1, j] \right\},\nonumber\\
\tilde{\mathbf{F}}_{\bar{\alpha}\bar{\alpha}} &= \frac{2}{\sigma_R^2} \Re \left\{ ([1, j] \otimes \text{vec}(\mathbf{A} \mathbf{X}))^H ([1, j] \otimes \text{vec}(\mathbf{A} \mathbf{X})) \right\}\overset{a}{=}\frac{2}{\sigma_R^2} \Re \left\{ ([1, j]^H [1, j]) (\text{tr}(\mathbf{A} \mathbf{X})^H \mathbf{A} \mathbf{X}) \right\}\label{elemntoffisher2}.
\end{align}
\hrulefill
\end{figure*}

\textbf{Common approximation:} By setting \( \dot{\mathbf{X}} = \frac{\partial \mathbf{X}}{\partial \theta} = 0 \) at (\ref{elemntoffisher2}), we obtain: ${\mathbf{F}}_{\theta\theta} =\frac{2L |c_3|^2}{\sigma_R^2} \{\text{tr}(\mathbf{\dot{A}} \mathbf{R}_x \mathbf{\dot{A}}^H)\}$, ${\mathbf{F}}_{\theta\bar{\alpha}}=\frac{2L}{\sigma_R^2} \Re \left\{ c_3^* (\text{tr}(\mathbf{A} \mathbf{R}_x \mathbf{\dot{A}}^H))[1, j] \right\}$, and $\tilde{\mathbf{F}}_{\bar{\alpha}\bar{\alpha}}=\frac{2L}{\sigma_R^2} \text{tr}(\mathbf{A} \mathbf{R}_x \mathbf{A}^H) \mathbf{I}_2$. The CRB($\theta$) is derived in Appendix \ref{lemma4p}. Moreover, In Appendix \ref{lemma4p}, we propose a lower bound (proof in Appendix \ref{forsummeryproof}), an upper bound (proof in Appendix \ref{forsummeryproof2}), and an approximate value (proof in Appendix \ref{forsummeryproof3}) for \( P(\text{CRB}(\theta) > \epsilon) \). 

\textbf{Exact derivation:} 
We rewrite the received echo signal at the BS by substituting \(\mathbf{X}\) from (\ref{x}) into \(\mathbf{Y}_r\) and considering the orthogonality between \(\mathbf{a}\) and \(\mathbf{t}_i\) for \(i \in \{3, \dots, N\}\), as well as between \(\mathbf{a}\) and \(\mathbf{\tilde{h}}\). Additionally, using \(||\mathbf{a}||^2 = N\), we obtain:  $\mathbf{Y}_{r} = \alpha c_3 \mathbf{b}(\theta)\sqrt{N} \sqrt{P\tau} \mathbf{s}_u + \mathbf{Z}_{r}.$ In Appendix \ref{proofexactderivationlemma}, $\text{CRB}(\theta)$ and \( P(\text{CRB}(\theta) > \epsilon) \) are derived.  
\subsection{CCDF of $\text{CRB}(\phi)$ at the sensing eavesdropper}\label{opoftargeteav}
In the received echo signal at the sensing eavesdropper, \( \mathbf{Y}_{sr} \), we assume that \( c_4 \) is also an unknown but deterministic parameter. However, our primary parameter of interest is \( \phi \), which is to be estimated. The CCDF of CRB for the target’s angle with respect to the sensing eavesdropper is derived in Appendix \ref{proofcrbsensingeav}.
\subsection{Target Ergodic CRB at the BS and Sensing Eavesdropper}\label{targetergodiccrbatthebs}
Two approaches for deriving the CCDF of the CRB at the BS—namely, the common approximation and the exact derivation—lead to two different ergodic CRBs at the BS. In Appendix \ref{proofofderivingcommon}, we derived \( E[\text{CRB}(\theta)] \) using these two approaches. Moreover, \( E[\text{CRB}(\phi)] \) at the sensing eavesdropper is derived.
\section{Secrecy Ergodic Rate}\label{secrecyergodicrate}
In this section, we derive the ergodic secrecy rate, defined as $C_s = \left( \mathbb{E}_{\mathbf{h},\mathbf{a}(\theta)} \left[ \log(1 + \text{SINR}_u) \right] - \mathbb{E}_{\mathbf{h},\mathbf{a}(\theta),\mathbf{h}_e} \left[ \log(1 + \text{SINR}_e) \right] \right)^+.$ In Appendix \ref{proofpdfuser}, we derive the ergodic rate of the user, i.e., \( \mathbb{E}_{\mathbf{h},\mathbf{a}} \left[ \log(1 + \text{SINR}_u) \right] \). Furthermore, in Appendix \ref{proofpdfeav}, we derive the information leakage to the eavesdropper, namely \( \mathbb{E}_{\mathbf{h}_e, \mathbf{h}, \mathbf{a}(\theta)} \left[ \log(1 + \text{SINR}_e) \right] \).
\section{Numerical Results}\label{simulations}
Unless stated otherwise, we use the following parameters: \( N = 15 \), \( M = 17 \), \( N_e = 15 \), \( P = 10 \), \( \sigma_u = \sigma_r = 1 \), \( L = 30 \), \( c_1 = c_2 = \sqrt{0.001} \), \( \delta = 0.1 \), \( \alpha = 0.2 \), and \( c_3 = c_4 = 0.001 \). Simulations are averaged over 10,000 channel realizations. Fig. \ref{newfigures}(a) illustrates the ergodic rates at the user and the communication eavesdropper as functions of \( \tau \), the fraction of BS power allocated to user data. The ergodic secrecy rate remains positive and generally increases with \( \tau \), thanks to partial power allocation to artificial noise (AN). It peaks before slightly declining near \( \tau = 1 \), but stays positive. The user rate upper bound 2 is tight, and the results confirm Remark 2, showing that both the user and eavesdropper rates grow with \( \tau \).

Fig. \ref{newfigures}(b) presents the behavior of \( E[\text{CRB}(\theta)] \) and \( E[\text{CRB}(\phi)] \) with respect to \( \tau \). It is observed that the BS consistently achieves better angle estimation, as \( E[\text{CRB}(\phi)] \) is higher than both the exact and approximate values of \( E[\text{CRB}(\theta)] \) \footnote{As the number of antennas at the sensing eavesdropper increases, \( E[\text{CRB}(\phi)] \) decreases, which is evident from its expression derived in Section \ref{targetergodiccrbatthebs}.}. With increasing \( \tau \), exact CRBs decrease, while the approximate \( E[\text{CRB}(\theta)] \) increases—reflecting the assumption that both the data beam and AN contribute to sensing. The exact method, which accounts for AN being orthogonal to \( \mathbf{a}(\theta) \), shows that only the data beam affects estimation. As \( \tau \to 1 \), exact and approximate CRBs converge.

Fig. \ref{newfigures}(c) shows the complementary cumulative distribution functions (CCDFs) of the CRB, i.e., \( P(\text{CRB}(\phi) > \epsilon) \) and \( P(\text{CRB}(\theta) > \epsilon) \), plotted against \( 10 \log_{10}(\epsilon/10) \) for \( \tau = 0.76 \). As expected, the CCDFs decrease with increasing \( \epsilon \). The CRB at the sensing eavesdropper is consistently higher than both the exact and approximate CRBs at the BS. Furthermore, in the common approach, the upper bound, lower bound, and approximation are all closely aligned, demonstrating the tightness of these bounds. Monte Carlo results closely match the numerical derivations across all figures.
\begin{figure*}
    \centering
    \subfigure[Ergodic rate versus $\tau$]{\includegraphics[scale=.42]{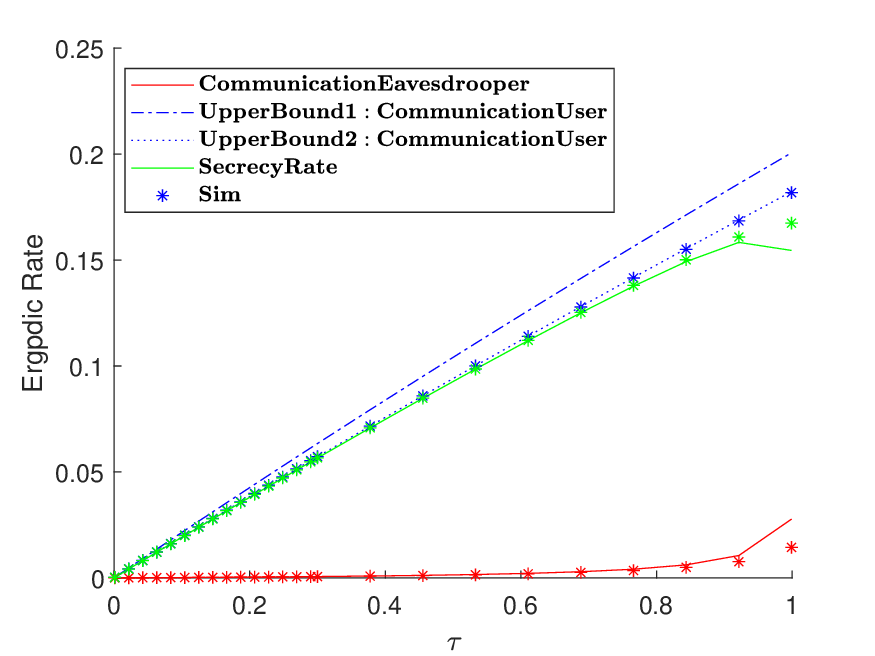}}\hspace{-0.5cm}\label{1}
    \subfigure[Ergodic CRB versus $\tau$]{\includegraphics[scale=.42]{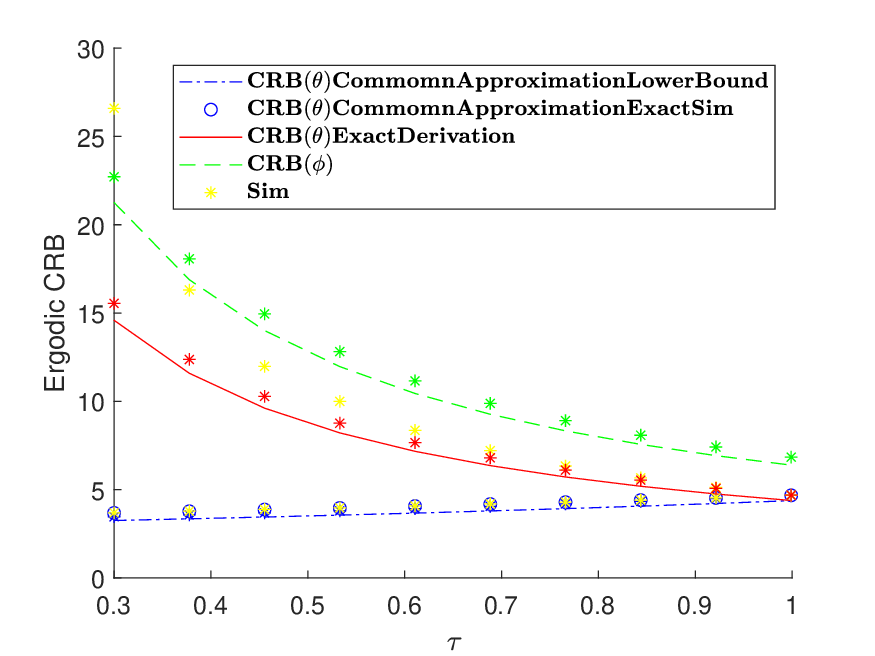}}\hspace{-0.5cm}\label{2}
    \subfigure[CCDF of CRB versus $10\times \log(\epsilon/10)$]{\includegraphics[scale=.42]{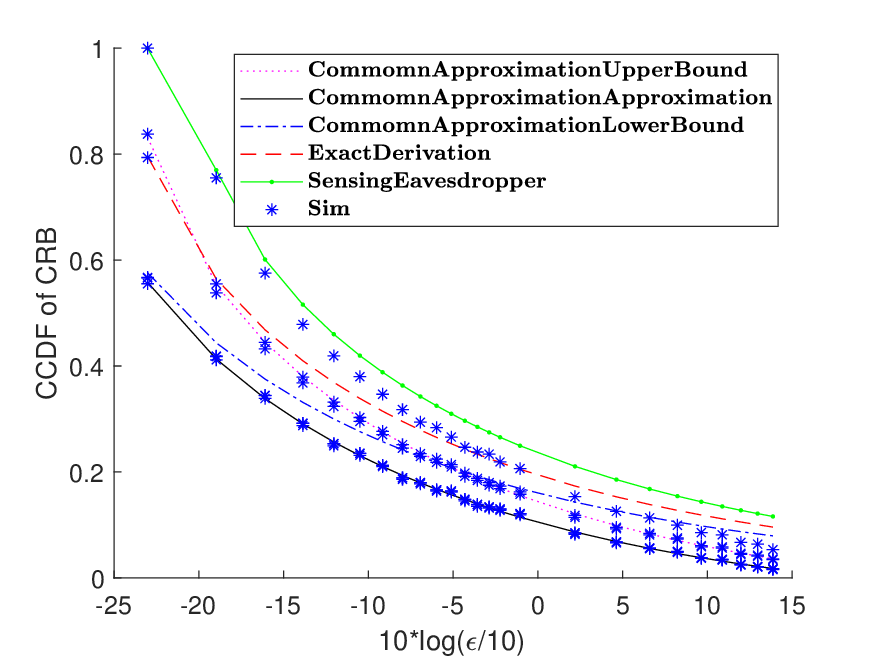}}\hspace{-0.5cm}\label{3}
    \caption{Ergodic rate, Ergodic CRB, and CCDF of CRB}
    \label{newfigures}
\end{figure*}
\begin{comment}
\section{Conclusion}\label{conclusion}
We analyze ISAC security and privacy in a downlink MIMO system with channel randomness. A multi-antenna BS uses precoding and artificial noise to transmit data, sense a target’s angle, and counter eavesdropping. The system includes both communication and sensing eavesdropper. We derive exact expressions for the secrecy ergodic rate and ergodic CRB, considering channel randomness.
\end{comment}
\bibliographystyle{IEEEtran}
\bibliography{ref}
\appendices
\section{Proof of Lemma \ref{lemma4}}\label{lemma4p}
 After some mathematical operation and using the FIM elements, (\ref{crbphi}) simplifies to (\ref{crb}). Then, by substituting (\ref{Rx}) and $\mathbf{A}(\theta)$ at (\ref{crb}), the CRB($\theta$) is derived in the following lemma:
\begin{lemma}\label{lemma4}
$\text{CRB}(\theta)$ is equal to:
\begin{align}
\frac{Q}{\gamma_1||\mathbf{b}'||^2N|\alpha|^2+\gamma_2M(||\mathbf{a}'||^2-\frac{(\sum_{i=1}^{N}-f'_i{t}_i)^2+(\sum_{i=1}^{N}f'_i{r}_i)^2}{K-\frac{1}{N}(T^2+R^2)}))}\label{crbsimplified}
\end{align}
where $\mathbf{b}'$ is the derivation of $\mathbf{b}(\theta)$ with respect to $\theta$. ${r}_i\triangleq \mathcal{R}(e^{jf_i}{h}_i)=|{h}_i|\cos(f_i+{{\phi}}_i)$, ${t}_i\triangleq \mathcal{I}(e^{jf_i}{h}_i)=|{h}_i|\sin(f_i+{{\phi}}_i)$, ${{k}}_i\triangleq |{h}_i|^2$, ${R}\triangleq \sum_{i=1}^{N}{r}_i$, ${T}\triangleq \sum_{i=1}^{N}{t}_i$, ${{K}}\triangleq \sum_{i=1}^{N}{{k}}_i$, $||\mathbf{b}'||^2=\frac{\pi^2\cos^2(\theta)M(M^2-1)}{12}$, $||\mathbf{a}'||^2=\frac{\pi^2\cos^2(\theta)N(N^2-1)}{12}$, $Q\triangleq\frac{\sigma^2_R}{2 \mid c_3 \mid ^2 L}$, $P\tau \triangleq \gamma_1$ and $\frac{(1-\tau)P}{N-2} \triangleq \gamma_2$.
\end{lemma}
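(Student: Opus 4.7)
The plan is to reduce the Schur complement in (\ref{crbphi}) using the block structure of $\mathbf{F}$ under the common approximation, then to evaluate the three trace quantities that appear by exploiting the orthogonality baked into the beamformer construction. First, I would observe that $\mathbf{F}_{\bar{\alpha}\bar{\alpha}} = (2L/\sigma_R^2)\,\mathrm{tr}(\mathbf{A}\mathbf{R}_x\mathbf{A}^H)\,\mathbf{I}_2$ is a scalar multiple of the identity and that $\mathbf{F}_{\theta\bar{\alpha}} = (2L/\sigma_R^2)\,\Re\{c_3^*\,\mathrm{tr}(\mathbf{A}\mathbf{R}_x\dot{\mathbf{A}}^H)[1,j]\}$ is the real-imaginary split of a single complex scalar. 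Consequently, $\mathbf{F}_{\theta\bar{\alpha}}\mathbf{F}_{\bar{\alpha}\bar{\alpha}}^{-1}\mathbf{F}_{\bar{\alpha}\theta}$ collapses to $(2L|c_3|^2/\sigma_R^2)\cdot|\mathrm{tr}(\mathbf{A}\mathbf{R}_x\dot{\mathbf{A}}^H)|^2/\mathrm{tr}(\mathbf{A}\mathbf{R}_x\mathbf{A}^H)$. Defining $E_1,E_2,E_3$ as the three traces $\mathrm{tr}(\mathbf{A}\mathbf{R}_x\mathbf{A}^H)$, $\mathrm{tr}(\mathbf{A}\mathbf{R}_x\dot{\mathbf{A}}^H)$, $\mathrm{tr}(\dot{\mathbf{A}}\mathbf{R}_x\dot{\mathbf{A}}^H)$, this yields $\mathrm{CRB}(\theta) = Q/(E_3 - |E_2|^2/E_1)$ with $Q = \sigma_R^2/(2|c_3|^2 L)$.

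Next, using $\mathbf{A} = \mathbf{b}\mathbf{a}^H$, hence $\dot{\mathbf{A}} = \dot{\mathbf{b}}\mathbf{a}^H + \mathbf{b}\dot{\mathbf{a}}^H$, I would compute each $E_k$ by direct trace manipulation. Two facts do most of the work: (i) the steering vectors have unit-modulus entries with antisymmetric phases, forcing $\mathbf{b}^H\dot{\mathbf{b}} = 0$ and $\mathbf{a}^H\dot{\mathbf{a}} = 0$; and (ii) the beamformer construction gives $\mathbf{a}^H\mathbf{t}_1 = \alpha\sqrt{N}$ and $\mathbf{a}^H\mathbf{t}_i = 0$ for $i\ge 3$, since $\mathbf{t}_3,\ldots,\mathbf{t}_N$ lie in the null space of $\mathbf{A}^H = [\tilde{\mathbf{a}},\tilde{\mathbf{h}}]^H$. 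These identities kill the cross terms, leaving $E_1 = M\gamma_1 N|\alpha|^2$, $E_2 = M\gamma_1\,\alpha\beta^*\sqrt{N}\,\tilde{\mathbf{h}}^H\dot{\mathbf{a}}$, and $E_3 = \|\dot{\mathbf{b}}\|^2\gamma_1 N|\alpha|^2 + M\,\dot{\mathbf{a}}^H\mathbf{R}_x\dot{\mathbf{a}}$. Expanding $\dot{\mathbf{a}}^H\mathbf{R}_x\dot{\mathbf{a}}$ with $\mathbf{R}_x = \gamma_1\mathbf{t}_1\mathbf{t}_1^H + \gamma_2\sum_{i\ge 3}\mathbf{t}_i\mathbf{t}_i^H$ and invoking the Parseval identity $\|\dot{\mathbf{a}}\|^2 = |\tilde{\mathbf{h}}^H\dot{\mathbf{a}}|^2 + \sum_{i\ge 3}|\mathbf{t}_i^H\dot{\mathbf{a}}|^2$ (allowed because $\tilde{\mathbf{a}}^H\dot{\mathbf{a}} = 0$) then gives $\dot{\mathbf{a}}^H\mathbf{R}_x\dot{\mathbf{a}} = \gamma_1|\beta|^2|\tilde{\mathbf{h}}^H\dot{\mathbf{a}}|^2 + \gamma_2(\|\dot{\mathbf{a}}\|^2 - |\tilde{\mathbf{h}}^H\dot{\mathbf{a}}|^2)$. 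Forming $E_3 - |E_2|^2/E_1$ makes the $M\gamma_1|\beta|^2|\tilde{\mathbf{h}}^H\dot{\mathbf{a}}|^2$ terms cancel exactly, yielding the skeleton $\gamma_1\|\mathbf{b}'\|^2 N|\alpha|^2 + M\gamma_2(\|\mathbf{a}'\|^2 - |\tilde{\mathbf{h}}^H\dot{\mathbf{a}}|^2)$.

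Finally, I would translate $|\tilde{\mathbf{h}}^H\dot{\mathbf{a}}|^2$ into the scalar random variables in the statement. From the Gram--Schmidt definition, $\tilde{\mathbf{h}}^H\dot{\mathbf{a}} = \mathbf{h}^H\dot{\mathbf{a}}/\sqrt{\|\mathbf{h}\|^2 - |\tilde{\mathbf{a}}^H\mathbf{h}|^2}$ because $\tilde{\mathbf{a}}^H\dot{\mathbf{a}} = 0$. A direct expansion $\mathbf{h}^H\dot{\mathbf{a}} = -j\sum_i f'_i(e^{jf_i}h_i)^* = -\sum_i f'_i t_i - j\sum_i f'_i r_i$ gives the numerator $(\sum_i f'_i t_i)^2 + (\sum_i f'_i r_i)^2$, and $\|\mathbf{h}\|^2 - |\tilde{\mathbf{a}}^H\mathbf{h}|^2 = K - (R^2+T^2)/N$ (using $\mathbf{a}^H\mathbf{h} = R + jT$ and $\|\mathbf{a}\|^2 = N$). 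Plugging in, and recalling the standard arithmetic-series identity $\|\mathbf{a}'\|^2 = \sum_i (f'_i)^2 = \pi^2\cos^2(\theta)N(N^2-1)/12$ and analogously for $\|\mathbf{b}'\|^2$, reproduces (\ref{crbsimplified}).

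The main obstacle is the careful bookkeeping of the cross terms: one must verify that $\mathbf{b}^H\dot{\mathbf{b}}$ vanishes so that $E_2$ contains only the $\mathbf{a}^H\mathbf{R}_x\dot{\mathbf{a}}$ piece, and that the $|\beta|^2|\tilde{\mathbf{h}}^H\dot{\mathbf{a}}|^2$ contributions from $E_3$ and $|E_2|^2/E_1$ cancel identically after dividing out $E_1$. Beyond this, everything reduces to routine trace algebra and the scalar reformulation in the final step.
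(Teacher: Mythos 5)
Your proposal is correct and follows essentially the same route as the paper's proof: reduce the Schur complement in (\ref{crbphi}) to $Q\bigl(\mathrm{tr}(\dot{\mathbf{A}}\mathbf{R}_x\dot{\mathbf{A}}^H)-|\mathrm{tr}(\mathbf{A}\mathbf{R}_x\dot{\mathbf{A}}^H)|^2/\mathrm{tr}(\mathbf{A}\mathbf{R}_x\mathbf{A}^H)\bigr)^{-1}$, evaluate the three traces using $\mathbf{a}^H\dot{\mathbf{a}}=\mathbf{b}^H\dot{\mathbf{b}}=0$ and $\mathbf{a}^H\mathbf{t}_i=0$ for $i\ge 3$, use the completeness relation to get $\|\mathbf{a}'\|^2-|\tilde{\mathbf{h}}^H\dot{\mathbf{a}}|^2$, and translate $|\tilde{\mathbf{h}}^H\dot{\mathbf{a}}|^2$ into $R$, $T$, $K$ exactly as in (\ref{crbnew2}). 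The only cosmetic difference is that you divide out $E_1$ up front so the cancellation of the $\gamma_1 M|\dot{\mathbf{a}}^H\mathbf{t}_1|^2$ term is explicit, whereas the paper carries the product form of the determinant before simplifying.
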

\begin{figure*}[t]
\normalsize
\begin{align}
\text{CRB}(\theta)&=\frac{\sigma^2_R \text{Tr}(\mathbf{A}^H(\theta) \mathbf{A}(\theta) \mathbf{R}_x)}{2 \mid c_3 \mid ^2 L (\text{Tr}(\mathbf{A}^H(\theta) \mathbf{A}(\theta) \mathbf{R}_x) \text{Tr}(\dot{\mathbf{A}}^{H}(\theta) \dot{\mathbf{A}}^(\theta) \mathbf{R}_x)-\mid \text{Tr}(\dot{\mathbf{A}}^{H}(\theta) \mathbf{A}(\theta) \mathbf{R}_x)\mid^2)}. \label{crb}
\end{align}
\hrulefill
\end{figure*}
To prove this lemma, we begin by simplifying each term in the numerator and denominator of (\ref{crb}). By defining $P\tau \triangleq \gamma_1$ and $\frac{(1-\tau)P}{N-2} \triangleq \gamma_2$, we obtain:
\begin{align}
&\text{Tr}(\mathbf{A}^H(\theta) \mathbf{A}(\theta) \mathbf{R}_x)\overset{(a)}=\text{Tr}(\mathbf{a}\mathbf{b}^H\mathbf{b}\mathbf{a}^H ( 
\gamma_1\mathbf{t}_1\mathbf{t}^H_1+\gamma_2\sum_{i=3}^{N}\mathbf{t}_i\mathbf{t}^H_i)\nonumber\\
&\overset{(b)}=\gamma_1||\mathbf{b}||^2|\mathbf{a}^H\mathbf{t}_1|^2,\label{denom1}
\end{align}
where (a) follows from the use of $\mathbf{R}_X$ in (\ref{Rx}), and (b) follows from the property $\text{Tr}(abc) = \text{Tr}(bca) = \text{Tr}(cab)$ and the orthogonality between $\mathbf{a}$ and $\mathbf{t}_i$ for $i \in \{3, \dots, N\}$. Moreover, we have:
\begin{align}
&\text{Tr}(\mathbf{A}'^H(\theta) \mathbf{A}'(\theta) \mathbf{R}_x)=\nonumber\\
&\text{Tr}((\mathbf{a}\mathbf{b}'^H+\mathbf{a}'\mathbf{b}^H)(\mathbf{b}'\mathbf{a}^H+\mathbf{b}\mathbf{a}'^H) ( 
\gamma_1\mathbf{t}_1\mathbf{t}^H_1+\gamma_2\sum_{i=3}^{N}\mathbf{t}_i\mathbf{t}^H_i))\nonumber\\
&\overset{(a)}=\gamma_1(||\mathbf{b}'||^2|\mathbf{a}^H\mathbf{t}_1|^2+||\mathbf{b}||^2|\mathbf{a}'^H\mathbf{t}_1|^2)+\gamma_2||\mathbf{b}||^2\sum_{i=3}^{N}|\mathbf{a}'^H\mathbf{t}_i|^2,\label{denom2}
\end{align}
where (a) is due to the following:
\begin{align}
&\!\!\!\!\!\text{Tr}((\mathbf{b}'\mathbf{a}^H+\mathbf{b}\mathbf{a}'^H)\gamma_1\mathbf{t}_1\mathbf{t}_1^H(\mathbf{a}\mathbf{b}'^H+\mathbf{a}'\mathbf{b}^H))\nonumber\\
&\!\!\!\!\!\overset{(a)}=\gamma_1(\text{Tr}(\mathbf{a}^H\mathbf{t}_1\mathbf{t}^H_1\mathbf{a}\mathbf{b}'^H\mathbf{b}')+\text{Tr}(\mathbf{a}'^H\mathbf{t}_1\mathbf{t}^H_1\mathbf{a}'\mathbf{b}^H\mathbf{b})).\\
&\!\!\!\!\!\text{Tr}((\mathbf{b}'\mathbf{a}^H+\mathbf{b}\mathbf{a}'^H)\gamma_2(\sum_{i=3}^{N}\mathbf{t}_i\mathbf{t}_i^H)(\mathbf{a}\mathbf{b}'^H+\mathbf{a}'\mathbf{b}^H))\nonumber\\
&\!\!\!\!\!\overset{(b)}=\!\text{Tr}(\mathbf{a}'^H\gamma_2(\sum_{i=3}^{N}\mathbf{t}_i\mathbf{t}_i^H)\mathbf{a}'\mathbf{b}^H\mathbf{b})).
\end{align}
where (a) and (b) follow from the properties $\text{Tr}(abc) = \text{Tr}(bca) = \text{Tr}(cab)$ and $\text{Tr}(\mathbf{b}'\mathbf{a}^H\mathbf{t}_1\mathbf{t}_1^H\mathbf{a'}\mathbf{b}^H) = \text{Tr}(\mathbf{b}\mathbf{a}'^H\mathbf{t}_1\mathbf{t}_1^H\mathbf{a}\mathbf{b}'^H) = \text{Tr}(\mathbf{b}'\mathbf{a}^H\mathbf{t}_i\mathbf{t}_i^H\mathbf{a'}\mathbf{b}^H) = \text{Tr}(\mathbf{b}\mathbf{a}'^H\mathbf{t}_i\mathbf{t}_i^H\mathbf{a}\mathbf{b}'^H) = 0$ since $\mathbf{a}^H\mathbf{a}' = \mathbf{a}'^H\mathbf{a} = \mathbf{b}^H\mathbf{b}' = \mathbf{b}'^H\mathbf{b} = 0$
and $\mathbf{a}^H\mathbf{t}_i = 0 \quad \text{for} \quad i \in \{3, \dots, N\}.
$ Moreover:
\begin{align}
&\text{Tr}(\mathbf{A}'^H(\theta) \mathbf{A}(\theta) \mathbf{R}_x)=\nonumber\\
&\text{Tr}(\mathbf{b}\mathbf{a}^H( 
\gamma_1\mathbf{t}_1\mathbf{t}^H_1+\gamma_2\sum_{i=3}^{N}\mathbf{t}_i\mathbf{t}^H_i)(\mathbf{a}\mathbf{b}'^H+\mathbf{a}'\mathbf{b}^H))\nonumber\\
&\overset{(a)}=\gamma_1||\mathbf{b}||^2\mathbf{a}^H\mathbf{t}_1\mathbf{t}_1^H\mathbf{a}'\label{denom3}
\end{align}
where (a) follows from the properties $\text{Tr}(\mathbf{b}\mathbf{a}^H\mathbf{t}_1\mathbf{t}_1^H(\mathbf{a}\mathbf{b}'^H)) = \text{Tr}(\mathbf{b}\mathbf{a}^H\mathbf{t}_i\mathbf{t}_i^H(\mathbf{a}\mathbf{b}'^H)) = 0,$ $\text{Tr}(abc) = \text{Tr}(bca) = \text{Tr}(cab)$, and the fact that \(\mathbf{a}^H\mathbf{t}_i = 0\) for \(i \in \{3, \dots, N\}\). Additionally, since \(\mathbf{a}^H\mathbf{t}_1\mathbf{t}_1^H\mathbf{a}'\) is a scalar, the expression simplifies accordingly. Therefore, by utilizing (\ref{denom1}), (\ref{denom2}), (\ref{denom3}), and $\gamma^2||\mathbf{b}||^4\mathbf{a}^H\mathbf{t}_1\mathbf{t}_1^H\mathbf{a}'\mathbf{a}'^H\mathbf{t}_1\mathbf{t}_1^H\mathbf{a} = \gamma^2||\mathbf{b}||^4|\mathbf{a}^H\mathbf{t}_1|^2|\mathbf{a}'^H\mathbf{t}_1|^2,$ and defining $Q \triangleq \frac{\sigma^2_R}{2 |c_3|^2 L},$ we obtain:
\begin{align}
\text{CRB}(\theta)=\frac{Q}{\gamma_1||\mathbf{b}'||^2|\mathbf{a}^H\mathbf{t}_1|^2+\gamma_2||\mathbf{b}||^2\sum_{i=3}^{N}|\mathbf{a}'^H\mathbf{t}_i|^2}\label{crbnew1}
\end{align}
Moreover, by replacing \(\mathbf{t}_1 = \alpha \tilde{\mathbf{a}} + \beta \tilde{\mathbf{h}}\), we get \( |\mathbf{a}^H\mathbf{t}_1|^2 = |\alpha|^2 N \), where we use the orthogonality between \(\mathbf{a}\) and \(\tilde{\mathbf{h}}\), and \(|\mathbf{a}|^2 = N\). Furthermore, we have \(\sum_{i=3}^{N}|\mathbf{a}'^H\mathbf{t}_i|^2 = \mathbf{a}'^H (\mathbf{I} - \tilde{\mathbf{a}}\tilde{\mathbf{a}}^H - \tilde{\mathbf{h}}\tilde{\mathbf{h}}^H) \mathbf{a'}\), since the matrix \(\begin{bmatrix} \mathbf{\tilde{a}} & \mathbf{\tilde{h}} & \mathbf{G} \end{bmatrix}\), where \(\mathbf{G} = \begin{bmatrix} \mathbf{t}_3 & \dots & \mathbf{t}_N \end{bmatrix}\), forms an orthonormal basis for the \(N\)-dimensional space and is unitary. After some mathematical operations and replacing the definitions of \(\mathbf{\tilde{a}}\) and \(\mathbf{\tilde{h}}\), and noting that \(\mathbf{a}'(\theta)^H \mathbf{a}(\theta) = 0\), we get \(\sum_{i=3}^{N}|\mathbf{a}'^H\mathbf{t}_i|^2 = ||\mathbf{a}'||^2 - \frac{|\mathbf{a}'^H\mathbf{h}|^2}{||\mathbf{h}||^2 - \frac{1}{N} |\mathbf{a}^H \mathbf{h}|^2}\). Thus, by noting that \(||\mathbf{b}||^2 = M\), equation (\ref{crbnew1}) simplifies as:
\begin{align}
\text{CRB}(\theta)=\frac{Q}{\gamma_1||\mathbf{b}'||^2N|\alpha|^2+\gamma_2M(||\mathbf{a}'||^2-\frac{|\mathbf{a}'^H\mathbf{h}|^2}{||\mathbf{h}||^2-\frac{1}{N}|\mathbf{a}^H\mathbf{h}|^2}))}\label{crbnew2}
\end{align}
Moreover, by rewriting \(\mathbf{a}\) and \(\mathbf{h}\) element-wise and after performing some operations, the \(\text{CRB}(\theta)\) is derived as (\ref{crbsimplified}).
\begin{lemma}\label{forsummery}
A lower bound for $P(\text{CRB}(\theta)>\epsilon)$ is given by:  
\begin{align}  
&\frac{2}{\pi}\sin^{-1}(\sqrt{6}\sigma_R(\epsilon MN \pi ^2LP \mid c_3 \mid ^2\big(|\alpha|^2\tau(M^2-1) \label{crbb1zero}\\   
&+\frac{(N^2-1)(1-\tau)}{N-2}\big))^{-1/2})\nonumber,
\end{align}  
when $(\sqrt{6}\sigma_R(\epsilon MN \pi ^2LP \mid c_3 \mid ^2\big(|\alpha|^2\tau(M^2-1)+\frac{(N^2-1)(1-\tau)}{N-2}\big))^{-1/2})<1,$ and \( 1 \) otherwise. 
\end{lemma}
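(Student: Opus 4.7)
The plan is to obtain the lower bound on $P(\text{CRB}(\theta)>\epsilon)$ by first producing a deterministic lower bound on $\text{CRB}(\theta)$ that depends only on $\theta$ (not on $\mathbf{h}$), and then translating the resulting event into a tractable event on the uniformly distributed angle $\theta$.

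The starting point is the closed form (\ref{crbsimplified}) from Lemma \ref{lemma4}. Writing $\text{CRB}(\theta)=Q/D$ where $D$ is the denominator, the event $\{\text{CRB}(\theta)>\epsilon\}$ is equivalent to $\{D<Q/\epsilon\}$, so an upper bound on $D$ yields a lower bound on the CCDF. The key observation is that the subtracted term $\frac{(\sum_i -f'_i t_i)^2 + (\sum_i f'_i r_i)^2}{K-\frac{1}{N}(T^2+R^2)}$ is non-negative: its numerator is a sum of squares, and its denominator equals $\|\mathbf{h}\|^2-\tfrac{1}{N}|\mathbf{a}^H\mathbf{h}|^2\geq 0$ by Cauchy--Schwarz (this is exactly the expression appearing in the derivation preceding (\ref{crbnew2})). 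Dropping this non-negative term inside the denominator gives the upper bound $D\leq \gamma_1\|\mathbf{b}'\|^2 N|\alpha|^2+\gamma_2 M\|\mathbf{a}'\|^2$, which depends on the randomness only through $\cos^2(\theta)$ via the formulas for $\|\mathbf{a}'\|^2$ and $\|\mathbf{b}'\|^2$ stated in Lemma \ref{lemma4}.

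Next I would substitute $\|\mathbf{b}'\|^2=\frac{\pi^2\cos^2(\theta)M(M^2-1)}{12}$, $\|\mathbf{a}'\|^2=\frac{\pi^2\cos^2(\theta)N(N^2-1)}{12}$, $\gamma_1=P\tau$, $\gamma_2=\frac{(1-\tau)P}{N-2}$, and $Q=\frac{\sigma_R^2}{2|c_3|^2 L}$, factor out $\frac{\pi^2\cos^2(\theta)MNP}{12}$, and simplify. The resulting implication is
\begin{align*}
\text{CRB}(\theta)>\epsilon \;\Longleftarrow\; \cos^2(\theta) < \kappa^2,
\end{align*}
where $\kappa^2 = \frac{6\sigma_R^2}{\epsilon|c_3|^2 L \pi^2 MNP\left[|\alpha|^2\tau(M^2-1)+\frac{(N^2-1)(1-\tau)}{N-2}\right]}$. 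Taking probabilities preserves the inequality, so $P(\text{CRB}(\theta)>\epsilon)\geq P(\cos^2(\theta)<\kappa^2)$.

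Finally, since $\theta\sim\mathcal{U}(-\pi/2,\pi/2)$, whenever $\kappa<1$ the event $\{\cos^2\theta<\kappa^2\}$ is the same as $\{|\theta|>\cos^{-1}\kappa\}$ (the case $\kappa\geq 1$ trivially gives probability $1$, which matches the ``otherwise'' clause in the statement). Its probability is $\frac{2(\pi/2-\cos^{-1}\kappa)}{\pi}=\frac{2\sin^{-1}\kappa}{\pi}$, using the identity $\cos^{-1}\kappa+\sin^{-1}\kappa=\pi/2$. Substituting $\kappa=\sqrt{6}\sigma_R\bigl(\epsilon MN\pi^2 LP|c_3|^2(|\alpha|^2\tau(M^2-1)+\frac{(N^2-1)(1-\tau)}{N-2})\bigr)^{-1/2}$ recovers (\ref{crbb1zero}).

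The only non-routine step is justifying the sign of the dropped denominator term; everything else is algebraic simplification followed by elementary manipulation of the uniform distribution of $\theta$. The sign check reduces to recognizing the expression $K-\frac{1}{N}(T^2+R^2)$ as $\|\mathbf{h}\|^2-\frac{1}{N}|\mathbf{a}^H\mathbf{h}|^2$, which is already implicit in the derivation of (\ref{crbnew2}) in Appendix \ref{lemma4p}.
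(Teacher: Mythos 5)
Your proposal is correct and follows essentially the same route as the paper's own proof: drop the non-negative subtracted term in the denominator of (\ref{crbsimplified}) (justified via Cauchy--Schwarz, i.e., $T^2+R^2 \le NK$), obtain a deterministic-in-$\mathbf{h}$ lower bound on $\text{CRB}(\theta)$ depending only on $\cos^2(\theta)$, and then evaluate the resulting event under the uniform distribution of $\theta$ to get the $\frac{2}{\pi}\sin^{-1}(\cdot)$ expression. The only difference is cosmetic (you phrase the monotonicity step in terms of the denominator $D$ rather than the CRB itself), so no further comparison is needed.
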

\begin{lemma}\label{forsummery2}
An upper bound for $P(\text{CRB}(\theta)>\epsilon)$ is given by:  
\begin{align}  
P_{Uc}(\epsilon)&=\frac{1}{\pi}\!\!\int_{0}^{\pi}\!\!\iiint_{\mathcal{D}(\theta, R,T, K)}\!\!\!\!\!\!f(R,T,K)\,dR\,dT\,dK\,d\theta, \label{crbb1one}  
\end{align}  
where $\mathcal{D}(\theta, R, T, K)= \frac{\frac{6\sigma_R^2}{L|c_3|^2\pi^2\cos^2(\theta)MN}}{\gamma_1|\alpha|^2(M^2-1)+\gamma_2(N^2-1)(1-\frac{1}{1-\frac{1}{N}\frac{(T^2+R^2)}{K}})}>\epsilon$ and \( f(R,T,K) \) is the PDF of a trivariate normal distribution with a mean vector \( N\mathbf{\mu_d} \) and covariance matrix \( N\mathbf{\Sigma_d} \) derived in Lemma \ref{lemma1i}, which is the product of three Gaussian distributions.
\end{lemma}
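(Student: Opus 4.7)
The plan is to produce a deterministic upper bound on $\text{CRB}(\theta)$ whose dependence on the channel randomness is only through the three statistics $R$, $T$, $K$ (and on $\theta$ through $\cos^2\theta$), and then integrate the indicator of that bound exceeding $\epsilon$ against the joint PDF of these variables.

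First I would take the closed form of Lemma \ref{lemma4} and focus on the cross term $\frac{(\sum_i -f'_i t_i)^2+(\sum_i f'_i r_i)^2}{K-\frac{1}{N}(T^2+R^2)}$ that enters the denominator with a minus sign. Upper bounding this cross term lower-bounds the denominator, hence upper-bounds $\text{CRB}(\theta)$. The Cauchy--Schwarz inequality gives $(\sum_i f'_i t_i)^2\le(\sum_i (f'_i)^2)(\sum_i t_i^2)$ and likewise for the $r_i$'s. Summing the two and using $t_i^2+r_i^2=k_i$ so that $\sum_i(t_i^2+r_i^2)=K$, together with the direct identity $\sum_i (f'_i)^2 = \frac{\pi^2\cos^2\theta\,N(N^2-1)}{12}=\|\mathbf{a}'\|^2$, yields $(\sum_i f'_i t_i)^2+(\sum_i f'_i r_i)^2 \le \|\mathbf{a}'\|^2 K$. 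Plugging this in simplifies the subtracted piece to $\frac{\|\mathbf{a}'\|^2}{1-\frac{1}{N}\frac{T^2+R^2}{K}}$, so the factor multiplied by $\gamma_2 M$ in the denominator becomes $\|\mathbf{a}'\|^2\bigl(1-\frac{1}{1-\frac{1}{N}\frac{T^2+R^2}{K}}\bigr)$. Substituting $\|\mathbf{b}'\|^2$, $\|\mathbf{a}'\|^2$, and $Q=\sigma_R^2/(2|c_3|^2 L)$ reproduces precisely the quantity inside the indicator region $\mathcal{D}(\theta,R,T,K)$ of (\ref{crbb1one}).

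By monotonicity, $\{\text{CRB}(\theta)>\epsilon\}\subseteq\mathcal{D}(\theta,R,T,K)$, so $P(\text{CRB}(\theta)>\epsilon)\le P(\mathcal{D})$. To evaluate the latter I would use that $\theta\sim\mathcal{U}(-\pi/2,\pi/2)$ is independent of $(R,T,K)$ and that $\mathcal{D}$ depends on $\theta$ only through $\cos^2\theta$, so the $\theta$-integral can be folded onto $[0,\pi]$ with density $\frac{1}{\pi}$. For the joint law of $(R,T,K)$ I would invoke Lemma \ref{lemma1i}, which applies the central limit theorem to the $N$ i.i.d. contributions from the Rayleigh-fading magnitudes and uniform phases to justify the trivariate normal approximation $\mathcal{N}_3(N\boldsymbol{\mu_d},N\boldsymbol{\Sigma_d})$ that factorizes into three independent Gaussians. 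Carrying out the resulting four-fold integral yields exactly (\ref{crbb1one}).

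The main obstacle is the Cauchy--Schwarz step: equality would require $t_i,r_i \propto f'_i$, which is not generic under Rayleigh fading and uniform phases, so the bound is nontrivially loose. This is not a defect for an upper bound on the CCDF, and the complementary lower bound in Lemma \ref{forsummery} together with the approximation to follow will sandwich the true CCDF; the numerical section confirms the three are close. A secondary technical point is that the trivariate normal form from Lemma \ref{lemma1i} is used as a CLT approximation valid for moderately large $N$, consistent with the paper's tractability strategy.
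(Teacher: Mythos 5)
Your proposal is correct and follows essentially the same route as the paper's Appendix C: a Cauchy--Schwarz bound on the cross term $\bigl(\sum_i -f'_i t_i\bigr)^2+\bigl(\sum_i f'_i r_i\bigr)^2$ by $\|\mathbf{a}'\|^2 K$, monotonicity to pass to the CCDF of the resulting UCRB, and the multidimensional CLT of Lemma \ref{lemma1i} for $(R,T,K)$ before integrating over $\mathcal{D}$. The only (immaterial) differences are that you apply two real Cauchy--Schwarz inequalities and sum them where the paper uses a single complex one, and that you assert the independence of $\theta$ and $(R,T,K)$ up front where the paper conditions on $\theta$ and then observes the conditional law does not depend on it.
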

In general, integrating a multivariate normal PDF over an arbitrary interval lacks a closed-form solution. However, we have used ray-tracing method and the MATLAB toolbox presented in \cite{Amethodtointegrate}, which enables the integration of Gaussian distributions in any dimension, over any domain.
%However, since the integral domain is quadratic, we can simplify the quadratic form into a weighted sum of non-central chi-square RVs.
\begin{lemma}\label{forsummery3}
An approximation for $P(\text{CRB}(\theta)>\epsilon)$ is derived by replacing $\mathcal{\tilde{D}}(\theta, R, T, K) = \frac{\frac{6\sigma_R^2}{L|c_3|^2\pi^2\cos^2(\theta)MN}}{\gamma_1|\alpha|^2(M^2-1)+\gamma_2(N^2-1)(1-\frac{1}{K-\frac{1}{N}(T^2+R^2)})}>\epsilon$ instead of $\mathcal{{D}}(\theta, R, T, K)$ into (\ref{crbb1one}).
\end{lemma}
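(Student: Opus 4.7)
\textbf{Proof plan for Lemma \ref{forsummery3}.} Start from the exact CRB in (\ref{crbnew2}), which depends on $\mathbf{h}$ through the three scalar statistics $R, T, K$ of Lemma \ref{lemma4} together with the extra quadratic form $S_2 := |\mathbf{a}'^H \mathbf{h}|^2$. The joint trivariate normal approximation for $(R,T,K)$ established in Lemma \ref{lemma1i} does not determine $S_2$, so any tractable CCDF must first eliminate $S_2$ from the expression. The upper bound of Lemma \ref{forsummery2} accomplishes this via Cauchy--Schwarz, replacing $S_2$ by $\|\mathbf{a}'\|^2\|\mathbf{h}\|^2 = \|\mathbf{a}'\|^2 K$. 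The plan for the present lemma is instead to replace $S_2$ by its unconditional mean, which decouples $S_2$ from $K$ and produces the simpler region $\tilde{\mathcal{D}}$.

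The one substantive computation is the moment $\E[S_2]$. Since $\mathbf{h}\sim\mathcal{CN}(\mathbf{0},\mathbf{I}_N)$, I would write $\E[S_2] = \E[\mathbf{h}^H \mathbf{a}' \mathbf{a}'^H \mathbf{h}] = \text{tr}(\mathbf{a}'\mathbf{a}'^H) = \|\mathbf{a}'\|^2 = \frac{\pi^2 \cos^2(\theta) N(N^2-1)}{12}$, which is exactly the closed form for $\|\mathbf{a}'\|^2$ already recorded in Lemma \ref{lemma4}. Substituting $S_2 \leftarrow \|\mathbf{a}'\|^2$ into (\ref{crbnew2}) and then dividing numerator and denominator by $\tfrac{\pi^2\cos^2(\theta)MN}{12}$ collapses the subtracted term to $\gamma_2(N^2-1)/(K-\tfrac{1}{N}(T^2+R^2))$, while the remaining algebra is identical to that used in the derivation of $\mathcal{D}$ for Lemma \ref{forsummery2}. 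This yields
\[
\text{CRB}(\theta) \approx \frac{\tfrac{6\sigma_R^2}{L|c_3|^2\pi^2\cos^2(\theta)MN}}{\gamma_1|\alpha|^2(M^2-1)+\gamma_2(N^2-1)\bigl(1-\tfrac{1}{K-\frac{1}{N}(T^2+R^2)}\bigr)},
\]
so the event $\{\text{CRB}(\theta)>\epsilon\}$ is precisely $\{\tilde{\mathcal{D}}(\theta,R,T,K)>\epsilon\}$.

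The remaining steps mirror Lemma \ref{forsummery2} verbatim: condition on $\theta\sim\mathcal{U}(-\pi/2,\pi/2)$, write the conditional probability as a triple integral of the trivariate Gaussian density $f(R,T,K)$ from Lemma \ref{lemma1i} over the region $\tilde{\mathcal{D}}(\theta,R,T,K)>\epsilon$, and then take the $\theta$-average $\tfrac{1}{\pi}\int_0^\pi(\cdot)\,d\theta$. Since $\tilde{\mathcal{D}}$ is a rational inequality in $(R,T,K)$, the inner integral has no closed form, but it is numerically evaluated by the same Gaussian-region/ray-tracing toolbox already invoked after Lemma \ref{forsummery2}.

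The main obstacle, compared with Lemma \ref{forsummery2}, is conceptual rather than algebraic: the replacement $S_2\mapsto\E[S_2]$ is not a one-sided inequality, so the output is a genuine approximation rather than a bound. A fully rigorous justification would require showing that $\E[S_2\mid R,T,K]$ stays close to $\|\mathbf{a}'\|^2$ on the bulk of the $(R,T,K)$-mass---heuristically true because the weights $f_i'$ are antisymmetric about the array center and therefore essentially orthogonal to the constant weights that define $R,T,K$---and that the CRB depends on $S_2$ only through a smooth monotone transform, so the residual fluctuation is damped. The tightness of the resulting approximation is ultimately confirmed against Monte Carlo in Fig.~\ref{newfigures}(c).
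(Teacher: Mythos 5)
Your proposal matches the paper's proof: the paper likewise replaces the cross term $(\sum_i -f'_i t_i)^2+(\sum_i f'_i r_i)^2=|\mathbf{a}'^H\mathbf{h}|^2$ in the denominator of (\ref{crbsimplified}) by its expectation $\sum_i (f'_i)^2=\|\mathbf{a}'\|^2$ (justified loosely by a large-$N$ concentration argument) and then reuses the Lemma~\ref{forsummery2} machinery with the modified region $\tilde{\mathcal{D}}$. Your computation of the mean via $\mathrm{tr}(\mathbf{a}'\mathbf{a}'^H)$ rather than element-wise independence is an immaterial difference, and your explicit caveat that the mean-replacement is not one-sided (hence an approximation, not a bound) is, if anything, more careful than the paper's stated justification.
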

\section{Proof of Lemma \ref{forsummery}}\label{forsummeryproof}
If we ignore the term \(\big((\sum_{i=1}^{N}-f'_i\hat{t}_i)^2+(\sum_{i=1}^{N}f'_i\hat{r}_i)^2\big)\) in the denominator of CRB\((\theta)\) in (\ref{crbsimplified}), the total CRB will be smaller. The proof follows from the Cauchy–Schwarz inequality for the \(N\)-dimensional complex space, which states that:  
\begin{align}  
|\sum_{i=1}^{N}\!e^{jf_i}{h}_i|^2=R^2+T^2&<(\sum_{i=1}^{N}\!|e^{jf_i}|^2)(\sum_{i=1}^{N}\!|{h}_i|^2)\nonumber\\&=N(\sum_{i=1}^{N}\!|h_i|^2)=NK,  
\end{align}  
thus, the term \(\frac{(\sum_{i=1}^{N}-f'_i{t}_i)^2+(\sum_{i=1}^{N}f'_i{r}_i)^2}{K-\frac{1}{N}(T^2+R^2)}\) is positive. By ignoring the term \(\big((\sum_{i=1}^{N}-f'_i\hat{t}_i)^2+(\sum_{i=1}^{N}f'_i\hat{r}_i)^2\big)\), the denominator of CRB\((\theta)\) becomes larger.  
Thus, we obtain a lower bound on the CRB, referred to as LCRB, given by  
\[
\frac{6\sigma^2_R}{\cos^2(\theta)MN \pi ^2LP \mid c_3 \mid ^2\big(|\alpha|^2\tau(M^2-1)+\frac{(N^2-1)(1-\tau)}{N-2}\big)}.
\]  
Consequently,  
\begin{align}  
\!\!\!\!P_c(\epsilon)=P(\text{CRB}(\theta)>\epsilon)>P(\text{LCRB}(\theta)\!\!>\epsilon)\triangleq P_{Lc}(\epsilon).  
\end{align}  
Therefore, \( P_{Lc} \) provides a lower bound on the CCDF of the target angle CRB. Since \( \theta \) is uniformly distributed in the interval \([0,\pi]\), after calculating the CDF of \(\cos^2(\theta)\), (\ref{crbb1zero}) is derived.
\section{Proof of Lemma \ref{forsummery2}}\label{forsummeryproof2}
Using the Cauchy–Schwarz inequality for an \(N\)-dimensional complex space, we have:  
\begin{align}  
&(\sum_{i=1}^{N}-f'_i{t}_i)^2+(\sum_{i=1}^{N}f'_i{r}_i)^2=|\sum_{i=1}^{N}\!jf'_ie^{jf_i}{h}_i|^2\nonumber\\  
&<(\sum_{i=1}^{N}\!|jf'_i|^2)(\sum_{i=1}^{N}\!|e^{jf_i}{h}_i|^2)=(\sum_{i=1}^{N}\!|f'_i|^2)({{K}}),  
\end{align}  
where \(\sum_{i=1}^{N}\!|f'_i|^2=\frac{\pi^2\cos^2(\theta)N(N^2-1)}{12}\). By replacing \((\sum_{i=1}^{N}-f'_i{t}_i)^2+(\sum_{i=1}^{N}f'_i{r}_i)^2\) with \({{K}}(\sum_{i=1}^{N}\!|f'_i|^2)\) in the denominator of CRB\((\theta)\) in (\ref{crbsimplified}), we obtain an upper bound on the CRB, referred to as UCRB. Thus,  
\begin{align}  
P_c(\epsilon)\!=\!\!P(\text{CRB}(\theta)\!>\epsilon)\!<\!P(\text{UCRB}(\theta)\!>\!\epsilon)\!\triangleq P_{Uc}(\epsilon),\label{ostad}  
\end{align}  
where \( P_{Uc} \) provides an upper bound on the CCDF of the target angle CRB. Using (\ref{crbsimplified}) and (\ref{ostad}), along with \( ||\mathbf{a}'||^2=(\sum_{i=1}^{N}\!|f'_i|^2) \), and conditioning on \( \theta \) while noting that \( \theta \) and \( \mathbf{h} \) are independent, we obtain:  
\begin{align}  
&P_{Uc}(\epsilon)\!\!=\!\!\!\int_{0}^{\pi}\!\!\!\!\!\!P\left(\frac{\frac{6\sigma_R^2}{L|c_3|^2\pi^2\cos^2(\theta)MN}}{\gamma_1|\alpha|^2(M^2-1)+\gamma_2(N^2-1)(1-\frac{1}{1-\frac{1}{N}\frac{(T^2+R^2)}{K}})}\right.\nonumber\\  
&>\epsilon \Bigg| \theta \Bigg) f_{\theta}(\theta)d\theta. \label{up}  
\end{align}  
We note that in (\ref{up}), given \( \theta \), the only random variables are \( R \), \( T \), and \( K \). Thus, to compute the inner probability, we need to derive the joint probability density function (PDF) of \( R \), \( T \), and \( K \). These variables (as well as \( r_i \), \( t_i \), and \( k_i \)) are not independent since they are functions of \( |h_i| \) (which follows a Rayleigh distribution), \( \phi_i \) (which is uniformly distributed over \( [0, 2\pi) \)), and \( f_i \) (a function of \( \theta \)), as provided in Lemma \ref{lemma4}.  
By conditioning on \( \theta \), \( f_i \), for all \( i=1,...,N \), is treated as constant in the following derivations. We define \( N \) random vectors, \( \mathbf{d}_i=[r_i, t_i, k_i]^T \in \mathbb{R}^{3 \times 1} \), for \( i=1,...,N \). For any pair of \( j \) and \( i\neq j \), the vectors \( \mathbf{d}_j \) and \( \mathbf{d}_i \) are i.i.d. because \( h_i \)'s and \( \phi_i \)'s are i.i.d.  
Thus, by applying the multidimensional central limit theorem (CLT)\cite{enwiki:1155685628}, when \( N \) is large \footnote{Section \ref{simulations} shows that for \( N>9 \), the multidimensional CLT holds. Additionally, in \cite{Aunifiedperformanceframeworkfor}, the authors demonstrate via simulations that CLT accuracy for a one-dimensional random variable holds for \( N>8 \).}, we have: $\sqrt{N}[\frac{1}{N}(\sum_{i=1}^{N}\mathbf{d}_i)-\mathbf{\mu_d}]\overset{d}{\rightarrow} \mathcal{N}_3(\mathbf{0},\mathbf{\Sigma_d}),$ which implies $\sum_{i=1}^{N}\mathbf{d}_i\overset{d} \rightarrow \mathcal{N}_3(N\mathbf{\mu_d}, N\mathbf{\Sigma_d}),$ where \( \mathbf{\mu_d} \) and \( \mathbf{\Sigma_d} \) are the mean vector and covariance matrix of \( \mathbf{d}_i \), respectively (identical for all \( i=1,...,N \)). Therefore, $[R, T, K]^T\overset{(d)}{\rightarrow} \mathcal{N}_3(N\mathbf{\mu_d},N\mathbf{\Sigma_d}).$\footnote{Monte Carlo simulations confirm the numerical approximation.}  
By determining \( \mathbf{\mu_d} \) and \( \mathbf{\Sigma_d} \) using Lemma \ref{lemma1i}, we derive the joint PDF of \( R \), \( T \), and \( K \).  
\begin{lemma} \label{lemma1i}
\(\mathbf{\mu_d}\) and \(\mathbf{\Sigma_d}\) are given by: $\mathbf{\mu}_d=[0, 0, 1]^T$ and  
\begin{align}  
\mathbf{\Sigma}_d&=\begin{bmatrix}  
\frac{1}{2}& 0 & 0 \\  
0& \frac{1}{2} & 0 \\  
0& 0 & 1   
\end{bmatrix}.\label{rtk}  
\end{align}
\end{lemma}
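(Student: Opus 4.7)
\textbf{Proof proposal for Lemma~\ref{lemma1i}.} The plan is to exploit the rotational invariance of the circularly symmetric complex Gaussian $h_i \sim \mathcal{CN}(0,1)$ to reduce the three-dimensional moment calculation to a handful of elementary Gaussian moments, conditional on $\theta$.

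First, I would rewrite the definitions from Lemma~\ref{lemma4} as $r_i + j t_i = |h_i|\,e^{j(f_i+\phi_i)} = e^{jf_i} h_i$. Since we condition on $\theta$, the phase $f_i$ is deterministic, and multiplying a $\mathcal{CN}(0,1)$ variable by a unit-modulus scalar preserves its distribution. Hence $r_i + j t_i \sim \mathcal{CN}(0,1)$, so $r_i$ and $t_i$ are independent real Gaussians with mean zero and variance $1/2$. This immediately yields the first two entries of $\boldsymbol{\mu}_d$ as $0$, the $(1,1)$ and $(2,2)$ entries of $\boldsymbol{\Sigma}_d$ as $1/2$, and the $(1,2)$ entry as $0$ by independence.

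Next, since $k_i = |h_i|^2 = r_i^2 + t_i^2$ and $|h_i|$ is Rayleigh with unit scale parameter, $k_i$ is exponentially distributed with unit mean. This gives the third coordinate of $\boldsymbol{\mu}_d$ as $1$ and the $(3,3)$ entry of $\boldsymbol{\Sigma}_d$ as $\mathrm{Var}(k_i)=1$. For the remaining off-diagonal entries, I would compute $\mathrm{Cov}(r_i,k_i) = E[r_i(r_i^2+t_i^2)] - E[r_i]\,E[k_i] = E[r_i^3] + E[r_i]\,E[t_i^2]$, which vanishes by symmetry of the zero-mean Gaussian $r_i$; the argument for $\mathrm{Cov}(t_i,k_i)=0$ is identical.

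The only subtlety, and what I would flag as the one place to be careful, is keeping the conditioning on $\theta$ explicit when invoking rotational invariance, so that $f_i$ is genuinely a constant rather than a random variable at that step; once this is pinned down, every entry of $\boldsymbol{\mu}_d$ and $\boldsymbol{\Sigma}_d$ follows from a one-line moment computation and no integration of Rayleigh-weighted trigonometric functions is required.
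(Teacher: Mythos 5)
Your proof is correct, and it reaches the lemma by a cleaner route than the paper. The paper's own proof works directly with the polar decomposition $h_i=|h_i|e^{j\phi_i}$: it obtains $E[r_i]=E[t_i]=0$ by noting that first-order trigonometric moments of the uniformly distributed phase vanish, and then computes each covariance entry by expanding products via the identities $\cos(A)\cos(B)=\tfrac{1}{2}[\cos(A-B)+\cos(A+B)]$, $\cos(A)\sin(A)=\tfrac{1}{2}\sin(2A)$, $\sin^2(A)=\tfrac{1}{2}(1-\cos(2A))$, together with the independence of $|h_i|$ and the phase and the moment $E[|h_i|^4]=2$ (the detailed integrations are omitted for space). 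You instead observe that, conditional on $\theta$, $r_i+jt_i=e^{jf_i}h_i\sim\mathcal{CN}(0,1)$ by circular symmetry, so $r_i$ and $t_i$ are immediately independent $\mathcal{N}(0,\tfrac{1}{2})$ variables and $k_i=r_i^2+t_i^2$ is exponential with unit mean; every entry of $\mathbf{\mu}_d$ and $\mathbf{\Sigma}_d$ then follows from elementary Gaussian moments, with the off-diagonal terms vanishing by independence and odd-moment symmetry rather than by trigonometric integration. The two arguments rest on the same underlying fact --- that $f_i+\phi_i$ is uniform modulo $2\pi$ given $\theta$ --- but packaging it as unitary invariance of the complex Gaussian eliminates all the trigonometric bookkeeping and, as a bonus, shows that $(r_i,t_i)$ is exactly jointly Gaussian rather than merely uncorrelated. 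Your explicit caveat that $f_i$ must be treated as a constant by conditioning on $\theta$ matches the paper, which states the same conditioning just before defining the vectors $\mathbf{d}_i$ and invoking the CLT.
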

\begin{proof}
First, we derive ${{\mathbf{\mu_d}}} = [E[{{r}}_i], E[{{t}}_i], E[{{k}}_i]]$. Since \( |h_i|^2 \) follows an exponential distribution with a mean of 1, we have $\mathbb{E}[|h_i|^2] = 1$. The terms in $r_i$ and $k_i$ contain cosine functions involving the random variables \( \phi_i \). The expectations of these cosine terms are zero because the angles are uniformly distributed. Thus, we obtain ${{\mathbf{\mu_d}}} = [0, 0, 1]$.  
Next, to calculate each element of the covariance matrix $\mathbf{{{\Sigma_d}}}$, we use the independence of $|h_i|$ and $\theta$, along with $E(|h_i|^4) = 2$. Additionally, we utilize the following identities: $\cos(A)\cos(B) = \frac{1}{2} [\cos(A - B) + \cos(A + B)], \cos(A)\sin(A) = \frac{1}{2} \sin(2A), \sin^2(A) = \frac{1}{2} (1 - \cos(2A)).$
Due to space limitations, we omit the detailed derivations.
\end{proof}
From Lemma \ref{lemma1i}, we see that the joint conditional PDF of \( R \), \( T \), and \( K \) is independent of \( \theta \) and \( \phi \). In fact, \( R \), \( T \), and \( K \) are independent Gaussian random variables since the off-diagonal elements of the covariance matrix are zero. Using this property, along with the assumption that \( \theta \) is uniformly distributed, and defining the domain: $\mathcal{D}(\theta, R, T, K)= \frac{\frac{6\sigma_R^2}{L|c_3|^2\pi^2\cos^2(\theta)MN}}{\gamma_1|\alpha|^2(M^2-1)+\gamma_2(N^2-1)(1-\frac{1}{1-\frac{1}{N}\frac{(T^2+R^2)}{K}})}>\epsilon.$, the proof is complete.
\section{Proof of Lemma \ref{forsummery3}}\label{forsummeryproof3}
When \( N \) is large, the law of large numbers implies that:  
\begin{align}
(\sum_{i=1}^{N}-f'_i{t}_i)^2+(\sum_{i=1}^{N}f'_i{r}_i)^2\overset{(p)}{\rightarrow} E\{(\sum_{i=1}^{N}-f'_i{t}_i)^2+(\sum_{i=1}^{N}f'_i{r}_i)^2 \}.
\end{align}  
Thus, by replacing \( (\sum_{i=1}^{N}-f'_i{t}_i)^2+(\sum_{i=1}^{N}f'_i{r}_i)^2 \) in the denominator of CRB(\(\theta\)) in (\ref{crbsimplified}) with  
\begin{align}  
E\{(\sum_{i=1}^{N}-f'_i{t}_i)^2+(\sum_{i=1}^{N}f'_i{r}_i)^2 \} \overset{(a)}{=} \sum_{i=1}^{N}(f'_i)^2E\{{{k_i}}\}=\sum_{i=1}^{N}(f'_i)^2,  
\end{align}  
where (a) follows from the independence of \( t_i \) from \( t_j \) and \( r_i \) from \( r_j \), we obtain an approximation of CRB(\(\theta\)), denoted as ACRB.  
Thus, we have:  
\begin{align}
P_c(\epsilon)\!=\!P(\text{CRB}(\theta)\!>\!\epsilon)\!\approx\! P(\text{ACRB}(\theta)\!>\!\epsilon)\!\triangleq\! P_{Ac}(\epsilon),\label{ap}
\end{align}  
where \( P_{Ac} \) serves as an approximation of the CCDF of the
target angle CRB. Following the same approach as in Appendix \ref{forsummeryproof2}, $P_{Ac}(\epsilon)$, is derived by replacing $\mathcal{\tilde{D}}(\theta, R, T, K) = \frac{\frac{6\sigma_R^2}{L|c_3|^2\pi^2\cos^2(\theta)MN}}{\gamma_1|\alpha|^2(M^2-1)+\gamma_2(N^2-1)(1-\frac{1}{K-\frac{1}{N}(T^2+R^2)})}>\epsilon.$ instead of $\mathcal{{D}}(\theta, R, T, K)$ into (\ref{crbb1one}).
\section{Proof of Lemma \ref{exactderivationlemma}}\label{proofexactderivationlemma}
\begin{lemma}\label{exactderivationlemma}
$\text{CRB}(\theta) =6\sigma_R^2(L|c_3|^2 P\tau N|\alpha|^2\pi^2\cos^2(\theta)M(M^2-1))^{-1}$. Thus, $P(\text{CRB}(\theta)>\epsilon)=\frac{2}{\pi}\sin^{-1}(\sqrt{6}\sigma_R(\epsilon L|c_3|^2 P\tau N|\alpha|^2\pi^2M(M^2-1))^{-1/2})$ when $(\sqrt{6}\sigma_R(\epsilon L|c_3|^2 P\tau N|\alpha|^2\pi^2M(M^2-1))^{-1/2})<1,$ and \( 1 \) otherwise.
\end{lemma}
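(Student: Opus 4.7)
The plan is to start from the simplified echo model already derived in the main text, namely $\mathbf{Y}_r = \alpha c_3 \sqrt{NP\tau}\,\mathbf{b}(\theta)\mathbf{s}_u + \mathbf{Z}_r$, and plug it directly into the general FIM expression (\ref{elemntoffisher2}). The crucial structural observation, which bypasses the bookkeeping of $\dot{\mathbf{X}}$ in the exact approach, is that all the $\theta$-dependence inside $\mathbf{a}(\theta)^H\mathbf{X}(\theta)$ has already collapsed to the constant $\alpha\sqrt{NP\tau}\mathbf{s}_u$ (the orthogonality $\mathbf{a}^H\tilde{\mathbf{h}}=0$ and $\mathbf{a}^H\mathbf{t}_i=0$ holds identically in $\theta$, and $\|\mathbf{a}\|=\sqrt{N}$ is independent of $\theta$). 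Consequently, after vectorization, $\tilde{\mathbf{u}} = c_3\alpha\sqrt{NP\tau}\,\text{vec}(\mathbf{b}(\theta)\mathbf{s}_u)$ and only $\mathbf{b}(\theta)$ carries sensitivity to $\theta$.

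Next, I would evaluate the three blocks of the FIM. Applying $\text{vec}(A)^H\text{vec}(B)=\text{tr}(A^H B)$ and $\frac{1}{L}\mathbf{s}_u\mathbf{s}_u^H\approx 1$ yields
\begin{align}
\mathbf{F}_{\theta\theta} &= \tfrac{2|c_3|^2|\alpha|^2 NP\tau L}{\sigma_R^2}\,\|\dot{\mathbf{b}}\|^2,\nonumber\\
\mathbf{F}_{\theta\bar\alpha} &= \tfrac{2L}{\sigma_R^2}\,\Re\!\left\{c_3^*\alpha\sqrt{NP\tau}\,(\dot{\mathbf{b}}^H\mathbf{b})\,[1,j]\right\},\nonumber\\
\mathbf{F}_{\bar\alpha\bar\alpha} &= \tfrac{2|\alpha|^2 NP\tau L M}{\sigma_R^2}\,\mathbf{I}_2.\nonumber
\end{align}
With $\mathbf{b}(\theta)$ a uniform-linear-array steering vector, writing $b_i=e^{jg_i}$ with $g_i=\pi\sin(\theta)\frac{2i-M-1}{2}$, a direct element-wise computation gives $\dot{\mathbf{b}}^H\mathbf{b}=-j\sum_i g_i'$, and the symmetric indexing makes $\sum_{i=1}^M (2i-M-1)=0$. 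So $\mathbf{F}_{\theta\bar\alpha}$ vanishes, and the Schur complement in (\ref{crbphi}) collapses to $\text{CRB}(\theta) = 1/\mathbf{F}_{\theta\theta}$. Substituting the standard closed form $\|\dot{\mathbf{b}}\|^2 = \pi^2\cos^2(\theta)M(M^2-1)/12$ (already recorded in Lemma \ref{lemma4}) produces the claimed expression $6\sigma_R^2\bigl(L|c_3|^2P\tau N|\alpha|^2\pi^2\cos^2(\theta)M(M^2-1)\bigr)^{-1}$.

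For the CCDF, the decisive point is that the resulting $\text{CRB}(\theta)$ has no dependence on $\mathbf{h}$ — the AN and the user-channel component of $\mathbf{t}_1$ are both annihilated by $\mathbf{a}(\theta)^H$. Thus $\theta$ is the sole random variable, and $P(\text{CRB}(\theta)>\epsilon)$ reduces to $P(\cos^2\theta < C)$ with $C=\frac{6\sigma_R^2}{\epsilon L|c_3|^2P\tau N|\alpha|^2\pi^2 M(M^2-1)}$. Using $\theta\sim\mathcal{U}(-\pi/2,\pi/2)$, the event $\cos^2\theta<C$ becomes $|\theta|>\cos^{-1}(\sqrt{C})$ when $C<1$; its measure is $\pi-2\cos^{-1}(\sqrt{C})$, yielding the probability $\frac{2}{\pi}\sin^{-1}(\sqrt{C})$ via $\sin^{-1}+\cos^{-1}=\pi/2$. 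The boundary case $C\ge 1$ gives probability $1$, matching the stated piecewise form.

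The main obstacle is less computational than conceptual: justifying that the exact derivation really does reduce to a single scalar Fisher information, i.e., arguing carefully that (i) $\mathbf{a}(\theta)^H\mathbf{X}(\theta)$ is independent of $\theta$ so one may treat the pre-$\mathbf{b}(\theta)$ factor as a constant when differentiating, and (ii) the cross term $\mathbf{F}_{\theta\bar\alpha}$ genuinely vanishes — the latter needs both parts of $\dot{\mathbf{b}}^H\mathbf{b}$ (real and imaginary) to be zero, which requires the ULA symmetry identity $\sum_i(2i-M-1)=0$ rather than just the standard $\Re(\mathbf{b}^H\dot{\mathbf{b}})=0$ consequence of $\|\mathbf{b}\|^2=M$. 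Once these two points are established, everything else is algebra and a one-line uniform-distribution computation.
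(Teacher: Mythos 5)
Your proposal is correct and follows essentially the same route as the paper's Appendix E: substitute the collapsed echo model $\mathbf{Y}_r=\alpha c_3\sqrt{NP\tau}\,\mathbf{b}(\theta)\mathbf{s}_u+\mathbf{Z}_r$ into the FIM, observe that the cross block vanishes because $\dot{\mathbf{b}}^H\mathbf{b}=0$, reduce the Schur complement to $1/\mathbf{F}_{\theta\theta}$, and finish with the CDF of $\cos^2\theta$ for $\theta\sim\mathcal{U}(-\pi/2,\pi/2)$. The only difference is that you explicitly verify $\dot{\mathbf{b}}^H\mathbf{b}=-j\sum_i g_i'=0$ via the symmetric ULA indexing $\sum_{i=1}^M(2i-M-1)=0$, a step the paper asserts without proof.
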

To begin the proof, we express the vectorized form of $\mathbf{Y}_{r}$ as follows: $\tilde{\tilde{\mathbf{y}}}_{r} = \text{vec}(\mathbf{Y}_{r}) = \tilde{\tilde{\mathbf{u}}} + \tilde{\mathbf{n}},$ where \( \tilde{\tilde{\mathbf{u}}} = \alpha c_3 \sqrt{N} \sqrt{P\tau} \text{vec}(\mathbf{b}(\theta)\mathbf{s}_u ) \) and \( \tilde{\mathbf{Z}_{r}} = \text{vec}(\mathbf{Z}_{r}) \sim \mathcal{CN}(0, \sigma^2_r\mathbf{I}_{ML}) \). Thus, we have: $\tilde{\tilde{\mathbf{y}}}_{r}\sim \mathcal{CN}(\tilde{\tilde{\mathbf{u}}}, \sigma^2_r\mathbf{I}_{ML})$. Furthermore, $\frac{\partial \tilde{\tilde{\mathbf{u}}}}{\partial \theta} = \alpha c_3 \sqrt{N} \sqrt{P\tau} \text{vec}(\dot{\mathbf{b}}(\theta)\mathbf{s}_u )$ and $\frac{\partial \tilde{\tilde{\mathbf{u}}}}{\partial \bar{\alpha}} = \alpha \sqrt{N} \sqrt{P\tau}[1, j] \otimes \text{vec}({\mathbf{b}}(\theta)\mathbf{s}_u )$. Following the same approach as (\ref{elemntoffisher2}) and noting that \(\frac{1}{L}E\{\mathbf{s}_u\mathbf{s}_u^H\} \approx 1\), we obtain: ${\mathbf{F}}_{\theta\theta} =\frac{2L P\tau N|\alpha|^2|c_3|^2}{\sigma_R^2} \{\text{tr}(\mathbf{\dot{b}} \mathbf{\dot{b}}^H)\}$, ${\mathbf{F}}_{\theta\bar{\alpha}}=\frac{2LP\tau N|\alpha|^2}{\sigma_R^2} \Re \left\{ c_3^* (\text{tr}(\mathbf{b} \mathbf{\dot{b}}^H))[1, j] \right\}$, and $\tilde{\mathbf{F}}_{\bar{\alpha}\bar{\alpha}}=\frac{2LP\tau N|\alpha|^2}{\sigma_R^2} \text{tr}(\mathbf{b} \mathbf{b}^H) \mathbf{I}_2$. Based on (\ref{crbphi}) and after some mathematical manipulation, we have: \(\text{CRB}(\theta) = \sigma_R^2(2L|c_3|^2 P\tau N|\alpha|^2\left( \text{tr}(\mathbf{\dot{b}}\mathbf{\dot{b}}^H) - \frac{|\text{tr}(\mathbf{b}\dot{\mathbf{b}}^H)|^2}{\text{tr}(\mathbf{b}\mathbf{b}^H)} \right))^{-1}\).
Then, using $\text{Tr}(ab)=\text{Tr}(ba)$, \(\mathbf{b'}^H(\theta)\mathbf{b}(\theta) = 0\), and $||\mathbf{b}'||^2=\pi^2\cos^2(\theta)M(M^2-1)/12$, the proof is complete. By comparing this result with Lemma \ref{forsummery}, we observe that the outage probability (OP) of \( \text{CRB}(\theta) \) in the exact derivation is greater than the lower bound of the OP of \( \text{CRB}(\theta) \) in the common approximation scenario.
\section{Proof of Lemma \ref{crbsensingeav}}\label{proofcrbsensingeav}
\begin{lemma}\label{crbsensingeav}
$\text{CRB}(\phi) = \frac{\sigma^2_R}{2 |c_4|^2 L \gamma_1 ||\mathbf{c}'||^2 |\alpha|^2 N}$ where $||\mathbf{c}'||^2 = \frac{\pi^2 \cos^2(\phi) N_e (N_e^2 - 1)}{12}$ and $P(\text{CRB}(\phi)>{\epsilon})=\frac{2}{\pi}\sin^{-1}(\sqrt{6}\sigma_R({\epsilon} N_eN \pi ^2LP \mid c_4 \mid ^2|\alpha|^2\tau(N_e^2-1))^{-1/2})$ when $(\sqrt{6}\sigma_R({\epsilon} N_eN \pi ^2LP \mid c_4 \mid ^2|\alpha|^2\tau(N_e^2-1))^{-1/2})<1$, and $1$ otherwise.
\end{lemma}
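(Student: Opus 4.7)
The plan is to follow the \textbf{exact derivation} methodology already established for the BS in Lemma \ref{exactderivationlemma}, adapted to the sensing eavesdropper's observation $\mathbf{Y}_{sr} = c_4\mathbf{c}(\phi)\mathbf{a}(\theta)^H\mathbf{X} + \mathbf{Z}_{sr}$. First I would substitute $\mathbf{X}=\sqrt{P\tau}\mathbf{t}_1\mathbf{s}_u+\sqrt{P(1-\tau)}\sum_{i=3}^{N}\mathbf{t}_i\mathbf{v}_i$ into $\mathbf{Y}_{sr}$ and exploit the three orthogonalities that hold by construction: $\mathbf{a}^H\mathbf{t}_i=0$ for $i\in\{3,\dots,N\}$ (since $\mathbf{t}_i$ span the null space of $\mathbf{A}^H$), $\mathbf{a}^H\tilde{\mathbf{h}}=0$ (by the Gram--Schmidt construction of $\tilde{\mathbf{h}}$), together with $\mathbf{a}^H\tilde{\mathbf{a}}=\|\mathbf{a}\|=\sqrt{N}$. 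These reduce $\mathbf{a}^H\mathbf{X}$ to the single surviving contribution $\alpha\sqrt{N}\sqrt{P\tau}\,\mathbf{s}_u$, yielding $\mathbf{Y}_{sr}=\alpha c_4\sqrt{N}\sqrt{P\tau}\,\mathbf{c}(\phi)\mathbf{s}_u+\mathbf{Z}_{sr}$, which is structurally identical to the simplified BS echo up to the replacements $c_3\!\to\! c_4$, $\mathbf{b}(\theta)\!\to\!\mathbf{c}(\phi)$, and $M\!\to\! N_e$.

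Next I would vectorize to get $\tilde{\mathbf{y}}_{sr}\sim\mathcal{CN}(\tilde{\mathbf{u}},\sigma_r^2\mathbf{I}_{N_eL})$ with $\tilde{\mathbf{u}}=\alpha c_4\sqrt{N P\tau}\,\text{vec}(\mathbf{c}(\phi)\mathbf{s}_u)$, and assemble the $3\times 3$ FIM for $\xi_e=[\phi,\mathcal{R}(c_4),\mathcal{I}(c_4)]^T$ using (\ref{elemntoffisher}). Because the covariance is parameter-independent, only the mean-derivative term contributes. Mimicking the computation in Appendix \ref{proofexactderivationlemma} and using $\tfrac{1}{L}\mathbf{s}_u\mathbf{s}_u^H\approx 1$, the blocks evaluate to
\begin{align}
\mathbf{F}_{\phi\phi}&=\tfrac{2LP\tau N|\alpha|^2|c_4|^2}{\sigma_r^2}\,\text{tr}(\dot{\mathbf{c}}\dot{\mathbf{c}}^H),\nonumber\\
\mathbf{F}_{\phi\bar{\alpha}}&=\tfrac{2LP\tau N|\alpha|^2}{\sigma_r^2}\Re\{c_4^*\,\text{tr}(\mathbf{c}\dot{\mathbf{c}}^H)[1,j]\},\nonumber\\
\mathbf{F}_{\bar{\alpha}\bar{\alpha}}&=\tfrac{2LP\tau N|\alpha|^2}{\sigma_r^2}\,\text{tr}(\mathbf{c}\mathbf{c}^H)\mathbf{I}_2.\nonumber
\end{align}
Applying the Schur-complement form (\ref{crbphi}) and invoking the standard ULA identity $\mathbf{c}^H(\phi)\dot{\mathbf{c}}(\phi)=0$ (which kills the cross term), together with $\|\mathbf{c}\|^2=N_e$ and the direct computation $\|\dot{\mathbf{c}}\|^2=\pi^2\cos^2(\phi)N_e(N_e^2-1)/12$, collapses the expression to the claimed $\text{CRB}(\phi)=\sigma_r^2/(2|c_4|^2 L\gamma_1\|\mathbf{c}'\|^2|\alpha|^2 N)$.

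For the CCDF, I note that once $\alpha$ is fixed the only source of randomness in $\text{CRB}(\phi)$ is $\phi$, entering solely through $\cos^2(\phi)$. The event $\{\text{CRB}(\phi)>\epsilon\}$ is equivalent to $\{\cos^2(\phi)<u\}$ with $u\triangleq 6\sigma_r^2/(\epsilon L|c_4|^2 P\tau N|\alpha|^2\pi^2 N_e(N_e^2-1))$. Since $\phi\sim\mathcal{U}(-\pi/2,\pi/2)$ and $\cos(\phi)\ge 0$ on this interval, I would write $P(\cos^2(\phi)<u)=P(|\phi|>\cos^{-1}(\sqrt{u}))=1-\tfrac{2}{\pi}\cos^{-1}(\sqrt{u})$ for $u<1$, and then use the identity $\cos^{-1}(x)+\sin^{-1}(x)=\pi/2$ to rewrite this as $\tfrac{2}{\pi}\sin^{-1}(\sqrt{u})$, matching the stated formula; for $u\ge 1$ the event holds almost surely and the CCDF equals $1$.

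The only mild obstacle is bookkeeping in the FIM derivation, namely verifying that all cross terms involving $\mathbf{a}^H\mathbf{t}_i$ (for $i\ge 3$) and $\mathbf{c}^H\dot{\mathbf{c}}$ genuinely vanish so that the final expression does not inherit dependence on $\mathbf{h}$ or $\theta$; this is precisely where the AN being confined to the null space of $\mathrm{span}\{\mathbf{a},\tilde{\mathbf{h}}\}$ pays off, and is what makes the eavesdropper-CRB simpler than its BS counterpart in the common-approximation case.
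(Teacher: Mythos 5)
Your proposal is correct and follows essentially the same route as the paper's Appendix F: both reduce to the same $3\times 3$ FIM for $[\phi,\mathcal{R}(c_4),\mathcal{I}(c_4)]$, kill the Schur-complement cross term via $\mathbf{c}^H\dot{\mathbf{c}}=0$, use $|\mathbf{a}^H\mathbf{t}_1|^2=|\alpha|^2N$, and obtain the CCDF from the CDF of $\cos^2(\phi)$. The only (immaterial) difference is that you apply the orthogonality of the AN to $\mathbf{a}(\theta)$ to collapse $\mathbf{Y}_{sr}$ before forming the FIM, whereas the paper keeps the $\text{tr}(\mathbf{B}\mathbf{R}_x\mathbf{B}^H)$-type traces and invokes the same orthogonality afterwards.
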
 
\textbf{Proof:} Based on the received echo signal at the sensing eavesdropper which is $\mathbf{Y}_{sr} = c_4 \mathbf{c}(\phi)\mathbf{a}(\theta)^H \mathbf{X} + \mathbf{Z}_{sr},$ where $\mathbf{X}=\sqrt{P\tau}\mathbf{t}_1 \mathbf{s}_u+ \sqrt{P(1-\tau)}\mathbf{G} \mathbf{V} = \sqrt{P\tau}\mathbf{t}_1 \mathbf{s}_u+ \sqrt{P(1-\tau)}\sum_{i=3}^{N}\mathbf{t}_i\mathbf{v}_i\in \mathbb{C}^{N\times L}$, we obtain the Fisher information matrix (FIM) for estimating \( \xi=[\phi,\mathcal{R}(c_4),\mathcal{I}(c_4)]^T \in \mathbb{R}^{3 \times 1} \) to facilitate the derivation of the $\text{CRB}(\phi)$. Let \( \mathbf{B}(\theta,\phi) =\mathbf{c}(\phi) \mathbf{a}(\theta)^T \), the received echo signal at the sensing eavesdropper can be rewritten as $\mathbf{Y}_{sr} = c_4 \mathbf{B}(\theta,\phi) \mathbf{X} + \mathbf{Z}_{sr}.$ For notational convenience, in the sequel we drop \( \theta \) and $\phi$ in \( \mathbf{B}(\theta,\phi) \), \( \mathbf{a}(\theta) \), and \( \mathbf{c}(\phi) \). By vectorizing $\mathbf{Y}_{sr}$, we have: $\tilde{\mathbf{y}}_{sr} = \text{vec}(\mathbf{Y}_{sr}) = \tilde{\mathbf{u}}_e + \tilde{\mathbf{n}},$ where \( \tilde{\mathbf{u}}_e = c_4 \text{vec}(\mathbf{B}\mathbf{X}) \) and \( \tilde{\mathbf{Z}_{sr}} = \text{vec}(\mathbf{Z}_{sr}) \sim \mathcal{CN}(0, \sigma^2_r\mathbf{I}_{N_eL}) \). By assuming that the eavesdropper is strong and it know $\mathbf{X}$ in the following we derive FIM. we have: $\tilde{\mathbf{y}}_{sr}\sim \mathcal{CN}(\tilde{\mathbf{u}}_e, \sigma^2_r\mathbf{I}_{N_eL})$. By defining $\bar{\alpha}_e=[\mathcal{R}(c_4),\mathcal{I}(c_4)]^T \in \mathbb{R}^{2 \times 1}$, we have $\frac{\partial \tilde{\mathbf{u}}_e}{\partial \phi} = c_4 \text{vec}(\mathbf{\dot{B}} \mathbf{X})+c_4 \text{vec}(\mathbf{B} \mathbf{\dot{X}})$ and $\frac{\partial \tilde{\mathbf{u}}_e}{\partial \bar{\alpha}_e} = [1, j] \otimes \text{vec}(\mathbf{B} \mathbf{X})$. Accordingly, the element of ${\mathbf{F}}$ is obtained as (\ref{elemntoffisher2}) by substituting $\mathbf{A}$ with $\mathbf{B}$, $\bar{\alpha}$ with $\bar{\alpha}_e$, $\theta$ with $\phi$, and $c_3$ with $c_4$, respectively. Also, \( \dot{\mathbf{B}} = \frac{\partial \mathbf{B}}{\partial \phi} \) and \( \dot{\mathbf{X}} = \frac{\partial \mathbf{X}}{\partial \phi} \) denoting the partial derivative of \( \mathbf{B} \) and $\mathbf{X}$ w.r.t. \( \phi \), respectively. Since \(\mathbf{X}\) is a function of \(\mathbf{a}(\theta)\) and \(\mathbf{h}\) (based on the construction of the basis for the \(N\)-dimensional space in Section \ref{systemmodel}), and its derivative with respect to \(\phi\) is zero. Thus, (\ref{elemntoffisher2}) is simplified to:
${\mathbf{F}}_{\phi\phi}=\frac{2L |c_4|^2}{\sigma_R^2} \{\text{tr}(\mathbf{\dot{B}} \mathbf{R}_x \mathbf{\dot{B}}^H)\}$, ${\mathbf{F}}_{\phi\bar{\alpha}_e} =\frac{2L}{\sigma_R^2} \Re \left\{ c_4^* (\text{tr}(\mathbf{B} \mathbf{R}_x \mathbf{\dot{B}}^H))[1, j] \right\}$, and ${\mathbf{F}}_{\bar{\alpha}_e\bar{\alpha}_e}=\frac{2L}{\sigma_R^2} \text{tr}(\mathbf{B} \mathbf{R}_x \mathbf{B}^H) \mathbf{I}_2$. Thus, similar to (\ref{crbphi}), we have: $\text{CRB}(\phi) = \left[{\mathbf{F}}_{\phi\phi} - {\mathbf{F}}_{\phi\bar{\alpha}_e}{\mathbf{F}}_{\bar{\alpha}_e\bar{\alpha}_e}^{-1} {\mathbf{F}}_{\bar{\alpha}_e\phi} \right]^{-1}$, where ${\mathbf{F}}_{\phi\bar{\alpha}_e}=({\mathbf{F}}_{\bar{\alpha}_e\phi})^T$. Thus, after some mathematical manipulation we have: $\text{CRB}(\theta) = \frac{\sigma_R^2}{2L|c_4|^2 \left( \text{tr}(\mathbf{\dot{B}} \mathbf{R}_x \mathbf{\dot{B}}^H) - \frac{|\text{tr}(\mathbf{B} \mathbf{R}_x \dot{\mathbf{B}}^H)|^2}{\text{tr}(\mathbf{B} \mathbf{R}_x \mathbf{B}^H)} \right)}$, where $\mathbf{\dot{B}}=\mathbf{\dot{c}}(\phi)\mathbf{a}^H(\theta)$ and $\mathbf{R}_x=\frac{1}{L}\mathbf {X}\mathbf{X}^H\approx 
P\tau\mathbf{t}_1\mathbf{t}^H_1+\frac{(1-\tau)P}{N-2}\sum_{i=3}^{N}\mathbf{t}_i\mathbf{t}^H_i$. By defining $P\tau \triangleq \gamma_1$ and $\frac{(1-\tau)P}{N-2} \triangleq \gamma_2$, We have:
\begin{align}
&\text{tr}(\mathbf{B} \mathbf{R}_x \mathbf{B}^H)=\text{Tr}(\mathbf{c}\mathbf{a}^H ( 
\gamma_1\mathbf{t}_1\mathbf{t}^H_1+\gamma_2\sum_{i=3}^{N}\mathbf{t}_i\mathbf{t}^H_i)\mathbf{a}\mathbf{c}^H)\nonumber\\
&\overset{(a)}=\gamma_1||\mathbf{c}||^2|\mathbf{a}^H\mathbf{t}_1|^2,\nonumber\\
&\text{tr}(\mathbf{\dot{B}} \mathbf{R}_x \mathbf{\dot{B}}^H)=\text{Tr}(\mathbf{\dot{c}}(\phi)\mathbf{a}^H(\theta) ( 
\gamma_1\mathbf{t}_1\mathbf{t}^H_1+\gamma_2\sum_{i=3}^{N}\mathbf{t}_i\mathbf{t}^H_i)\mathbf{a}(\theta)\mathbf{\dot{c}}^H(\phi))\nonumber\\
&\overset{(a)}=\gamma_1(||\mathbf{c}'||^2|\mathbf{a}^H\mathbf{t}_1|^2),\nonumber\\
&\text{tr}(\mathbf{B} \mathbf{R}_x \dot{\mathbf{B}}^H)=\text{Tr}(\mathbf{c}\mathbf{a}^H( 
\gamma_1\mathbf{t}_1\mathbf{t}^H_1+\gamma_2\sum_{i=3}^{N}\mathbf{t}_i\mathbf{t}^H_i)\mathbf{a}(\theta)\mathbf{\dot{c}}^H(\phi))\nonumber\\
&\overset{a}{=}0\label{hello}
\end{align}
where (a)s are due to $\text{Tr}(abc)=\text{Tr}(bca)=\text{Tr}(cab)$, the orthogonality between $\mathbf{a}$ and $\mathbf{t}_i$ for $i \in 3,...,N$, and \(\mathbf{c'}^H(\phi)\mathbf{c}(\phi) = 0\). Moreover, by replacing $\mathbf{t}_1= \alpha \tilde{\mathbf{a}} + \beta \tilde{\mathbf{h}}$, we have: $|\mathbf{a}^H\mathbf{t}_1|^2=|\alpha|^2N$ where we have used the orthogonality between $\mathbf{a}$ and $\tilde{\mathbf{h}}$ and $|\mathbf{a}|^2=N$. Therefore, we have: $(\text{CRB}(\phi) = \frac{\sigma^2_R}{2 |c_4|^2 L \gamma_1 ||\mathbf{c}'||^2 |\alpha|^2 N}$ where $||\mathbf{c}'||^2 = \frac{\pi^2 \cos^2(\phi) N_e (N_e^2 - 1)}{12}$. Since \(\phi\sim \mathcal{U}(-\pi/2,\pi/2)\), after calculating the CDF of \(\cos^2(\phi)\) \footnote{Any distribution of \(\phi\) can be used; one just needs to derive the CDF of \(\cos^2(\phi)\) for that distribution.}, the proof is complete.

\textbf{Remark1:} We note that setting \( \tau = 0 \) in the exact derivation of \( \text{CRB}(\theta) \) or Lemma \ref{crbsensingeav} yields \( \text{CRB}(\theta) = \text{CRB}(\phi)= \infty \). However, in Lemmas \ref{lemma4}, \ref{forsummery}, \ref{forsummery2}, and \ref{forsummery3}, a finite \( \text{CRB}(\theta) \) is obtained. This implies that when the entire BS power is allocated to AN, conventional CRB derivations assume AN contributes to sensing. In contrast, the exact method shows that AN does not aid target sensing at either the BS or the sensing eavesdropper, as it is orthogonal to \( \mathbf{a}(\theta) \) and therefore vanishes in the received echo at both locations.
\section{Ergodic CRB at the BS and Sensing eavesdropper}\label{proofofderivingcommon}
\textbf{Common approximation of $E[\text{CRB}(\theta)]$:} The approximated ergodic CRB at the BS is:
\begin{align}
\!\!\!\!E[\text{CRB}(\theta)]&\overset{(a)}=\int_{0}^{\infty}\!\!\!\!P(\text{CRB}(\theta)>t)dt\overset{(b)}\sim \int_{0}^{\infty} \!\!\!\!P_{Ac}(t)dt,\label{ergodiccrb}
\end{align}
where (a) and (b) follow from statistical properties, and \( P_{Ac}(\epsilon) = P(\text{CRB}(\theta) > \epsilon) \) is derived in Lemma \ref{forsummery3}\footnote{By substituting \( P_{Uc}(t) \), derived in (\ref{crbb1one}), one can obtain an upper bound for the ergodic CRB.}. A lower bound for the ergodic CRB at the BS is obtained by substituting \( P_{Lc}(t) \), derived in (\ref{crbb1zero}), into (\ref{ergodiccrb}). Thus, by truncating the domain to \( \theta \in [-\frac{\pi}{2} + \delta, \frac{\pi}{2} - \delta] \) for a small \( \delta > 0 \), a closed-form lower bound for the ergodic CRB at the BS is: $
\mathbb{E}[\text{CRB}(\theta)] = 
\frac{12\sigma^2_R}{ MN \pi ^3LP |c_3|^2\left(|\alpha|^2\tau(M^2-1)+\frac{(N^2-1)(1-\tau)}{N-2}\right)} \tan\left(\frac{\pi}{2} - \delta\right).
$

\textbf{Exact derivation of $E[\text{CRB}(\theta)]$:} The exact ergodic CRB at the BS is derived using the same approach as described above, along with the result from Lemma \ref{exactderivationlemma}, yielding:
$ \mathbb{E}[\text{CRB}(\theta)] = 
(\frac{12\sigma_R^2}{L|c_3|^2 P\tau N|\alpha|^2\pi^3M(M^2-1)}) \tan\left(\frac{\pi}{2} - \delta\right).$

\textbf{\( E[\text{CRB}(\phi)] \) at the sensing eavesdropper:} By substituting \( \epsilon \) with \( t \) in Lemma \ref{crbsensingeav} and using the result in (\ref{ergodiccrb}), and by truncating the domain \( \phi \in [-\frac{\pi}{2} + \delta, \frac{\pi}{2} - \delta] \) for a small \( \delta > 0 \), the closed-form expression for the ergodic CRB at the sensing eavesdropper is: $
\mathbb{E}[\text{CRB}(\phi)] = 
\frac{12\sigma^2_R}{ N_eN \pi ^3LP |c_4|^2|\alpha|^2\tau(N_e^2-1)} \tan\left(\frac{\pi}{2} - \delta\right)
$.
\section{Proof of Lemma \ref{lemmapdfuser}}\label{proofpdfuser}
\begin{lemma}\label{lemmapdfuser}
Exact ergodic rate at the user is:
\begin{align}
E_{\mathbf{h,a}} \!\!\left[ \log(1 + \text{SINR}_u) \right]&\!\!=\!\!\!\int_{0}^{\infty}\!\!\!\!\!\!\iiint_{\mathcal{\tilde{D}}(R,T, K, W)}\!\!\!\!\!\!\!\!\!\!\!\!\!\!\!\!\!\!\!\!\!\!\!\!\!\!f(R,T,K, W)\,dR\,dT\,dK\,dWdt\label{final}
\end{align}
where $\mathcal{\tilde{D}}(R, T, K, W)= \frac{P\tau|c_1|^2}{\sigma^2_u}[\frac{|\alpha|^2 - |\beta|^2}{N} (R^2 + T^2) + |\beta|^2 K + 2 \frac{|\alpha \beta|}{\sqrt{N}} W \sqrt{K - \frac{1}{N} (R^2 + T^2)}]>2^t-1$, and $f(R,T, K, W)$ is the PDF of a multivariate normal distribution with a mean vector of $N\mathbf{\tilde{\mu_d}}$ and a covariance matrix of $N\mathbf{\tilde{\Sigma_d}}$ given at (\ref{covuser}). Moreover, two upper bound for $E_{\mathbf{h,a}} \left[ \log(1 + \text{SINR}_u) \right]$ are: $\log\left(1 + \frac{P \tau |c_1|^2}{\sigma_u^2} (|\alpha|^2 + |\beta|^2 (N - 1))\right)$ and $\frac{e^{\sigma_u^2/P \tau |c_1|^2}}{(N-1)!} \cdot G_{2,3}^{3,1}\left(a \middle| \begin{array}{c} 
0, 0 \\ 
0, -1, N-1 
\end{array} \right),$
where \( G \) is the Meijer G-function.
\end{lemma}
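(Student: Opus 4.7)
The first step is to simplify the received signal at the user. Substituting the precoder \(\mathbf{X}=\sqrt{P\tau}\mathbf{t}_1\mathbf{s}_u+\sqrt{P(1-\tau)}\mathbf{G}\mathbf{V}\) into \(\mathbf{y}_u=c_1\mathbf{h}^H\mathbf{X}+\mathbf{z}_u\), I exploit the fact that \(\mathbf{h}\in\mathrm{span}\{\tilde{\mathbf{a}},\tilde{\mathbf{h}}\}\) by construction, so \(\mathbf{h}^H\mathbf{G}=\mathbf{0}\) and the AN term vanishes at the user. This collapses the SINR to \(\mathrm{SINR}_u=P\tau|c_1|^2|\mathbf{h}^H\mathbf{t}_1|^2/\sigma_u^2\). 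Next, writing \(\mathbf{t}_1=\alpha\tilde{\mathbf{a}}+\beta\tilde{\mathbf{h}}\) and using \(\mathbf{a}^H\mathbf{h}=R+jT\), \(\|\mathbf{h}\|^2=K\), and the Gram--Schmidt identity \(\mathbf{h}^H\tilde{\mathbf{h}}=\sqrt{K-(R^2+T^2)/N}\), I expand \(|\mathbf{h}^H\mathbf{t}_1|^2\) to obtain the three contributions \(\tfrac{|\alpha|^2-|\beta|^2}{N}(R^2+T^2)\), \(|\beta|^2K\), and a cross term \(\tfrac{2|\alpha\beta|}{\sqrt{N}}\,W\sqrt{K-(R^2+T^2)/N}\), where \(W\) is the real projection \(R\cos\psi+T\sin\psi\) with \(\psi=\angle(\alpha\beta^*)\). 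This matches exactly the region \(\tilde{\mathcal{D}}(R,T,K,W)\) defined in the lemma.

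For the joint distribution of \((R,T,K,W)\), I mirror the argument used in Lemma \ref{forsummery2}: construct i.i.d.\ four-dimensional vectors \(\tilde{\mathbf{d}}_i=[r_i,t_i,k_i,w_i]^T\) built from the per-antenna contributions, and invoke the multidimensional CLT for large \(N\), yielding \([R,T,K,W]^T\overset{d}{\to}\mathcal{N}_4(N\tilde{\boldsymbol{\mu}}_d,N\tilde{\boldsymbol{\Sigma}}_d)\). The entries of \(\tilde{\boldsymbol{\mu}}_d\) and \(\tilde{\boldsymbol{\Sigma}}_d\) follow by elementary moment computations using \(E[|h_i|^2]=1\), \(E[|h_i|^4]=2\), uniform \(\phi_i\), and the standard product-to-sum trigonometric identities, in exact parallel to the proof of Lemma \ref{lemma1i}. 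The exact ergodic rate then follows from the CCDF identity \(E[X]=\int_0^\infty P(X>t)\,dt\) applied to \(X=\log(1+\mathrm{SINR}_u)\), which immediately gives expression (\ref{final}) after rewriting the event \(\mathrm{SINR}_u>2^t-1\) in the form defining \(\tilde{\mathcal{D}}\).

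For the two upper bounds, I would proceed as follows. The first bound comes from Jensen's inequality applied to the concave \(\log(1+\cdot)\): using the expanded form of \(|\mathbf{h}^H\mathbf{t}_1|^2\) together with \(E[R^2+T^2]=N\), \(E[K]=N\), and the fact that the cross term has zero mean (because the inner expectation over the uniform phases kills \(E[W\sqrt{K-(R^2+T^2)/N}]\)), I get \(E[|\mathbf{h}^H\mathbf{t}_1|^2]=|\alpha|^2+(N-1)|\beta|^2\), and Jensen closes the first bound. The second bound arises from Cauchy--Schwarz: \(|\mathbf{h}^H\mathbf{t}_1|^2\le\|\mathbf{h}\|^2\|\mathbf{t}_1\|^2=K\) (since \(|\alpha|^2+|\beta|^2=1\)), so \(E[\log(1+\mathrm{SINR}_u)]\le E[\log(1+\gamma K)]\) with \(\gamma=P\tau|c_1|^2/\sigma_u^2\). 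Because \(K\sim\mathrm{Gamma}(N,1)\), the resulting integral \(\int_0^\infty \log(1+\gamma k)\,k^{N-1}e^{-k}/(N-1)!\,dk\) admits the standard Meijer-\(G\) closed form with \(a=\sigma_u^2/(P\tau|c_1|^2)\), yielding the stated expression.

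\textbf{Main obstacle.} The technical bottleneck is the covariance computation for the \emph{augmented} vector \((R,T,K,W)\): whereas \((R,T,K)\) has a diagonal covariance by the symmetry arguments of Lemma \ref{lemma1i}, introducing the linear combination \(W=R\cos\psi+T\sin\psi\) creates nonzero correlations with \(R\) and \(T\) (and, at worst, a degenerate covariance matrix), so care is needed to keep the joint Gaussian density well-defined and to carry out the four-dimensional integration. The region \(\tilde{\mathcal{D}}\) itself is not integrable in closed form, so the exact rate must be evaluated via the numerical ray-tracing method of \cite{Amethodtointegrate}, exactly as in Lemma \ref{forsummery2}; this is why the two explicit upper bounds are valuable complements to the exact expression.
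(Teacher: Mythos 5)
Your proposal follows essentially the same route as the paper: the same reduction of $\text{SINR}_u$ to $\frac{P\tau|c_1|^2}{\sigma_u^2}|\mathbf{h}^H\mathbf{t}_1|^2$, the same expansion in $R,T,K,W$, the same multidimensional CLT for the joint Gaussian, the same CCDF identity for the expectation, and the same two upper bounds (Jensen on $E[\text{SINR}_u]=\frac{P\tau|c_1|^2}{\sigma_u^2}(|\alpha|^2+(N-1)|\beta|^2)$, and Cauchy--Schwarz plus the $\text{Gamma}(N,1)$ law of $\|\mathbf{h}\|^2$ giving the Meijer-$G$ form), merely listed in the opposite order. Your observation that $W$ is an exact linear combination of $R$ and $T$, so $\tilde{\boldsymbol{\Sigma}}_d$ is singular, is a legitimate caveat that the paper's own proof does not address.
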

We note that the (\ref{final})) can be evaluated using the ray-tracing method and the MATLAB toolbox presented in \cite{Amethodtointegrate}.

\textbf{Proof:} First, we derive the exact expression for the ergodic rate at the user, followed by the derivation of its upper bounds. Based on (\ref{x}), and given the signal received at the user as described in Section \ref{systemmodel}, along with the orthogonality of \(\mathbf{v}_i\) and \(\mathbf{s}_u\), the SINR at the user is given by: 
$\text{SINR}_u=\frac{P\tau|c_1|^2|\mathbf{h}^H\mathbf{t}_1|^2}{\sigma^2+\frac{P(1-\tau)\sum_{i=3}^{N}|\mathbf{h}^H\mathbf{t}_i|^2}{N-2}}$. Due to the orthogonality between \(\mathbf{h}\) and \(\mathbf{t}_i\), this expression simplifies to:  
$\text{SINR}_u=\frac{P\tau|c_1|^2}{\sigma^2_u}|\mathbf{h}^H\mathbf{t}_1|^2.
$ First, we rewrite \( \text{SINR}_u = \frac{P \tau |c_1|^2}{\sigma^2_u} |\mathbf{h}^H \mathbf{t}_1|^2 \), where \( \mathbf{t}_1 = \alpha \tilde{\mathbf{a}} + \beta \tilde{\mathbf{h}} \), \( \tilde{\mathbf{a}} = \frac{\mathbf{a}}{||\mathbf{a}||} \), and \( \tilde{\mathbf{h}} = \frac{\mathbf{h} - (\tilde{\mathbf{a}}^H \mathbf{h}) \tilde{\mathbf{a}}}{||\mathbf{h} - (\tilde{\mathbf{a}}^H \mathbf{h}) \tilde{\mathbf{a}}||} \). After some mathematical operations, we obtain the following:
\begin{align}
\text{SINR}_u&=\frac{P\tau|c_1|^2}{\sigma^2_u}[(\frac{|\alpha|^2-|\beta|^2}{N})|\mathbf{a}^H\mathbf{h}|^2+|\beta|^2|\mathbf{h}|^2\nonumber\\&
+\frac{2}{\sqrt{N}}\mathcal{R}(\alpha^*\beta(\mathbf{a}^H\mathbf{h}))\sqrt{|\mathbf{h}|^2-\frac{|\mathbf{a}^H\mathbf{h}|^2}{N}}]\nonumber\\&
\overset{(a)}{=} \frac{P\tau|c_1|^2}{\sigma^2_u}[\frac{|\alpha|^2 - |\beta|^2}{N} (R^2 + T^2) + |\beta|^2 K \nonumber\\
&+ 2 \frac{|\alpha \beta|}{\sqrt{N}} W \sqrt{K - \frac{1}{N} (R^2 + T^2)}]\label{simplifieduser}
\end{align}
where (a) is due to rewriting \( \mathbf{a} \) and \( \mathbf{h} \) element-wise, using the definition of \( R \), \( T \), and \( K \) as in Lemma \ref{lemma4}, and \( W \triangleq \sum_{i=1}^{N} w_i \) where \( w_i \triangleq |h_i| \cos(f_i + \phi_i + \phi_{\beta} - \phi_{\alpha}) \), with \( \phi_{\alpha} \) and \( \phi_{\beta} \) being the phases of \( \alpha \) and \( \beta \), respectively. We note that in (\ref{simplifieduser}), the random variables (RVs) are \( R \), \( T \), \( K \), and \( W \), which are functions of \( |h_i| \) (where \( |h_i| \) follows a Rayleigh distribution), \( \phi_i \) (where \( \phi_i \) is the phase of \( h_i \) and is uniformly distributed over \( [0, 2\pi) \)), and \( f_i \) (a function of \( \theta \), where \( \theta \) is uniformly distributed over \( [0, \pi) \)). Moreover, we have:
\begin{align}
E_{\mathbf{h,a}} \left[ \log(1 + \text{SINR}_u) \right]&\overset{(a)}{=}E_{\mathbf{a}}\left[E_{\mathbf{h}} \left[ \log(1 + \text{SINR}_u) |\theta\right]\right]\nonumber\\
&\overset{(b)}{=}E_{\mathbf{a}}\left[\int_{0}^{\infty}\!\!\!\!P(\text{SINR}_u>2^t-1|\theta)dt\right]\label{simplifieduser2}
\end{align}
where (a) is due to the independence of \( \mathbf{h} \) and \( \mathbf{a} \), noting that the elements of \( \mathbf{a} \) are a function of \( \theta \), and (b) is due to statistical properties. To calculate the inner probability in (\ref{simplifieduser2}), we need to derive the joint PDF of \( R \), \( T \), \( K \), and \( W \). We note that \( R \), \( T \), \( K \), and \( W \) (as well as \( r_i \), \( t_i \), \( K_i \), and \( w_i \)) are not independent, as they are functions of \( |h_i| \), \( \phi_i \), and \( \theta \). Conditioned on \( \theta \), \( f_i \), for all \( i = 1, \dots, N \), will be treated as constant. We define \( N \) random vectors \( \mathbf{\tilde{d}}_i = [r_i, t_i, k_i, w_i]^T \in \mathbb{R}^{4 \times 1} \) for \( i = 1, \dots, N \). Then, following the same approach as in Appendix \ref{forsummeryproof2}, \( [R, T, K, W]^T \overset{(d)}{\rightarrow} \mathcal{N}_4(N \mathbf{\tilde{\mu_d}}, N \mathbf{\tilde{\Sigma_d}}) \), where \( \mathbf{\tilde{\mu_d}} \) and \( \mathbf{\tilde{\Sigma_d}} \) are the mean vector and covariance matrix of \( \mathbf{\tilde{d}}_i \) (the same for all \( i = 1, \dots, N \)). We note that the covariance matrix and mean vector of the joint RVs \( R \), \( T \), and \( K \) are derived with the aid of Lemma \ref{lemma1i}. Thus, to derive \( \mathbf{\tilde{\mu_d}} \) and \( \mathbf{\tilde{\Sigma_d}} \), we just need to derive the mean of \( w_i \) and the joint covariance between \( w_i \) and \( t_i \), \( r_i \), and \( k_i \). Using trigonometric formulas and following the same approach as in the proof of Lemma \ref{lemma1i}, we have:
\begin{align}
\mathbf{\tilde{\mu}}_d\!\!&=\!\![0, 0, 1, 0]^T \nonumber\\
\mathbf{\tilde{\Sigma_d}}\!\!&=\!\!\!\begin{bmatrix}
\frac{1}{2}& 0 & 0 & \frac{1}{2}\cos(\phi_{\alpha}-\phi_{\beta}) \\
0& \frac{1}{2} & 0 & \frac{1}{2}\sin(\phi_{\alpha}-\phi_{\beta})\\
0& 0 & 1 & 0 \\
\frac{1}{2}\cos(\phi_{\alpha}-\phi_{\alpha}) &  \frac{1}{2}\sin(\phi_{\alpha}-\phi_{\beta}) & 0 & \frac{1}{2}\!\!
\end{bmatrix}\label{covuser}
\end{align}
Thus, the joint PDF of \( R \), \( T \), \( K \), and \( W \), and consequently the ergodic rate of the user, is independent of \( \theta \). Therefore, (\ref{simplifieduser2}) simplifies to:
\begin{align}
E_{\mathbf{h,a}} \left[ \log(1 + \text{SINR}_u) \right]&\overset{(a)}{=}\int_{0}^{\infty}\!\!\!\!P(\text{SINR}_u>2^t-1)dt\label{simplifieduser3}
\end{align}
By defining the domain $\mathcal{\tilde{D}}(R, T, K, W)$ as in Lemma \ref{lemmapdfuser}, we have:
\begin{align}
P(\text{SINR}_u>2^t-1)&=\!\!\iiint_{\mathcal{\tilde{D}}(R,T, K, W)}\!\!\!\!\!\!\!\!\!\!\!\!\!\!\!\!\!\!\!\!f(R,T,K, W)\,dR\,dT\,dK\,dW\label{simplifieduser4}
\end{align}
where \( f(R,T,K,W) \) is the PDF of a multivariate normal distribution with mean vector \( N \mathbf{\tilde{\mu_d}} \) and covariance matrix \( N \mathbf{\tilde{\Sigma_d}} \) given in (\ref{covuser}). Thus, by using (\ref{simplifieduser4}) in (\ref{simplifieduser3}), the proof is complete. Next, we derive the upper and lower bounds for the user's ergodic rate.

\textbf{Upper Bound1:} $|\mathbf{h}^H \mathbf{t}_1|^2 \leq \|\mathbf{h}\|^2,$ so $\text{SINR}_u \leq \frac{P \tau |c_1|^2}{\sigma_u^2} \|\mathbf{h}\|^2.$ Since \(\|\mathbf{h}\|^2 \sim \text{Gamma}(N, 1)= \frac{x^{N-1} e^{-x}}{(N-1)!}, \quad x \geq 0\), an upper bound for $E_{\mathbf{h,a}} \left[ \log(1 + \text{SINR}_u) \right]$ is:
\begin{align}
&\mathbb{E}[\log(1 + \frac{P \tau |c_1|^2}{\sigma_u^2} \|\mathbf{h}\|^2)]\!\!=\!\!\!\!\int_0^\infty \!\!\!\! \!\!\!\!\log(1 + \!\!\frac{P \tau |c_1|^2}{\sigma_u^2} x) \cdot \frac{x^{N-1} e^{-x}}{(N-1)!} \, dx\nonumber\\
&=\frac{e^{\sigma_u^2/P \tau |c_1|^2}}{(N-1)!} \cdot G_{2,3}^{3,1}\left(a \middle| \begin{array}{c} 
0, 0 \\ 
0, -1, N-1 
\end{array} \right),
\end{align}
where \( G \) is the Meijer G-function. For arbitrary \( N \), using Taylor expansion around \( \mathbb{E}[|h|^2] = N \) we have: $\mathbb{E}[\log(1 + \frac{P \tau |c_1|^2}{\sigma_u^2}|h|^2)] \approx \log(1 + \frac{P \tau |c_1|^2}{\sigma_u^2} N) - \frac{a^2}{(1 + \frac{P \tau |c_1|^2}{\sigma_u^2} N)^2} \cdot \frac{N}{2}$.

\textbf{Upper Bound 2:} We have $|\mathbf{h}^H \mathbf{t}_1|^2 = |\alpha|^2 |\tilde{\mathbf{a}}^H \mathbf{h}|^2 + |\beta|^2 (\|\mathbf{h}\|^2 - |\tilde{\mathbf{a}}^H \mathbf{h}|^2) + 2 \text{Re}\{\alpha^* \beta (\tilde{\mathbf{a}}^H \mathbf{h}) \sqrt{\|\mathbf{h}\|^2 - |\tilde{\mathbf{a}}^H \mathbf{h}|^2}\}.$ Moreover, \(|\tilde{\mathbf{a}}^H \mathbf{h}|^2 \sim \text{Exp}(1)\), so \(\mathbb{E}[|\tilde{\mathbf{a}}^H \mathbf{h}|^2] = 1\).Since \(\|\mathbf{h}\|^2 \sim \text{Gamma}(N, 1)\), so \(\mathbb{E}[\|\mathbf{h}\|^2] = N\). Furthermore, the expectation of $\mathbb{E}[\text{Re}\{\alpha^* \beta (\tilde{\mathbf{a}}^H \mathbf{h}) \sqrt{\|\mathbf{h}\|^2 - |\tilde{\mathbf{a}}^H \mathbf{h}|^2}\}]$ is zero, due to the following geometric intuition. We express \(\mathbf{h}\) in a basis where \(\tilde{\mathbf{a}}\) is the first basis vector. Let: $\mathbf{h} = (\tilde{\mathbf{a}}^H \mathbf{h}) \tilde{\mathbf{a}} + \mathbf{h}_\perp,$ where \(\mathbf{h}_\perp\) is the component orthogonal to \(\tilde{\mathbf{a}}\). Then: $\|\mathbf{h}\|^2 = |\tilde{\mathbf{a}}^H \mathbf{h}|^2 + \|\mathbf{h}_\perp\|^2.$
Thus: $\mathbb{E}[\text{Re}\{\alpha^* \beta (\tilde{\mathbf{a}}^H \mathbf{h}) \sqrt{\|\mathbf{h}\|^2 - |\tilde{\mathbf{a}}^H \mathbf{h}|^2}\}] = \mathbb{E}[\text{Re}\{\alpha^* \beta (\tilde{\mathbf{a}}^H \mathbf{h})\}|\mathbf{h}_\perp\|] =0.$ The equality is due to independency of \(\mathbf{h}_\perp\) and \(\tilde{\mathbf{a}}^H \mathbf{h}\), and \(\tilde{\mathbf{a}}^H \mathbf{h}\) is symmetric in phase. Finally, we have: $\mathbb{E}[\text{SINR}_u] = \frac{P \tau |c_1|^2}{\sigma_u^2} (|\alpha|^2 + |\beta|^2 (N - 1)).$ thus, using Jensen's inequality, we have:
\begin{align}
\mathbb{E}[\log(1 + \text{SINR}_u)] \leq \log\left(1 + \mathbb{E}[\text{SINR}_u]\right)\nonumber\\
= \log\left(1 + \frac{P \tau |c_1|^2}{\sigma_u^2} (|\alpha|^2 + |\beta|^2 (N - 1))\right).
\end{align}
\textbf{Asymptotic Behavior for Large \(N\)}
For massive MIMO (\(N \gg 1\)):\(\|\mathbf{h}\|^2 \approx N\), so: Upper bound: \(\log(1 + \text{SINR}_u) \approx \log\left(1 + \frac{P \tau |c_1|^2}{\sigma_u^2} N\right)\). Lower bound: \(\log(1 + \text{SINR}_u) \approx \log\left(1 + \frac{P \tau |c_1|^2}{\sigma_u^2} |\beta|^2 N\right)\).

\textbf{Remark2:} We note from Lemma \ref{lemmapdfuser} that the ergodic rate at the user is an increasing function of \( \tau \). Additionally, by differentiating (\ref{final2}) with respect to \( \tau \) and observing that \( \frac{dI}{d\tau} > 0 \) for \( \tau \in [0,1] \), we conclude that the ergodic rate at the communication eavesdropper also increases with \( \tau \). Therefore, allocating more power to data benefits both the user and the eavesdropper. Whether a positive ergodic secrecy rate is achieved will be verified in the numerical results section.
\section{Proof of Lemma \ref{lemmapdfeav}}\label{proofpdfeav}
\begin{lemma}\label{lemmapdfeav}  
The exact ergodic rate at the communication eavesdropper is given by:  
\begin{align}
E_{\mathbf{h_e,h},\mathbf{a}(\theta)}\left[ \log(1 + \text{SINR}_e) \right]\!\!&\overset{(a)}{=}\!\!\!\int_{0}^{\infty}\!\!\!\!\! e^{(-\frac{T}{2C_1})}(1 + \frac{TC_2}{C_1})^{-(N-2)}dt,\label{final2}
\end{align}  
where \(T=2^t-1\), $C_1 = \frac{P \tau |c_2|^2}{\sigma^2}$, and $C_2 = \frac{P(1-\tau)/(N-2)}{\sigma^2}.$ 
\end{lemma}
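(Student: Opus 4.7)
The plan is to exploit the isotropy of $\mathbf{h}_e\sim\mathcal{CN}(\mathbf{0},\mathbf{I}_N)$ together with the orthonormality of $\{\mathbf{t}_1,\mathbf{t}_3,\ldots,\mathbf{t}_N\}$ to reduce $\text{SINR}_e$ to a simple ratio of independent exponential and Gamma variates, and then to apply the tail representation $E[\log(1+Z)]=\int_0^{\infty}P(Z>2^t-1)\,dt$. The starting point is the per-symbol $\text{SINR}_e$: substituting $\mathbf{X}$ from (\ref{x}) into $\mathbf{y}_e=c_2\mathbf{h}_e^H\mathbf{X}+\mathbf{z}_e$ and using the orthogonality of $\mathbf{s}_u$ to every row of $\mathbf{V}$, the data and AN streams decouple to give
\begin{align*}
\text{SINR}_e=\frac{P\tau|c_2|^2\,|\mathbf{h}_e^H\mathbf{t}_1|^2}{\sigma^2+\frac{P(1-\tau)|c_2|^2}{N-2}\sum_{i=3}^{N}|\mathbf{h}_e^H\mathbf{t}_i|^2}.
\end{align*}

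Next comes the key distributional reduction. Since $\mathbf{h}_e$ is independent of $(\mathbf{h},\theta)$---the only random inputs defining the basis---and $\mathcal{CN}(\mathbf{0},\mathbf{I}_N)$ is unitary-invariant, conditional on the basis the projections $\mathbf{h}_e^H\mathbf{t}_1,\mathbf{h}_e^H\mathbf{t}_3,\ldots,\mathbf{h}_e^H\mathbf{t}_N$ are i.i.d.\ $\mathcal{CN}(0,1)$. Consequently $X\triangleq|\mathbf{h}_e^H\mathbf{t}_1|^2$ and $Y\triangleq\sum_{i=3}^{N}|\mathbf{h}_e^H\mathbf{t}_i|^2$ are independent, with $X\sim\mathrm{Exp}(1)$ and $Y\sim\mathrm{Gamma}(N-2,1)$, and crucially the joint law of $(X,Y)$ does not depend on $(\mathbf{h},\theta)$, so the outer expectations collapse.

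With $T=2^t-1$, I would then compute the CCDF by conditioning on $Y$, applying the exponential tail, and invoking the Gamma moment generating function $E[e^{-sY}]=(1+s)^{-(N-2)}$:
\begin{align*}
P(\text{SINR}_e>T)&=E_Y\!\left[\exp\!\left(-\frac{T(1+C_2 Y)}{C_1}\right)\right]\\
&=e^{-T/C_1}\left(1+\frac{TC_2}{C_1}\right)^{-(N-2)}.
\end{align*}
Substituting this into the tail-integral formula yields the claimed one-dimensional integral representation (modulo the constant inside the exponential, which depends on the adopted variance convention for $\mathcal{CN}$). The main obstacle is justifying that the projections $\mathbf{h}_e^H\mathbf{t}_i$ are i.i.d.\ standard complex Gaussians despite the $\mathbf{t}_i$ being themselves random: the argument is that the basis is $\sigma(\mathbf{h},\theta)$-measurable and hence independent of $\mathbf{h}_e$, so for any deterministic unitary $U$ the vector $U^H\mathbf{h}_e$ has the same distribution as $\mathbf{h}_e$, and one may take $U=[\mathbf{t}_1\;\mathbf{t}_3\;\cdots\;\mathbf{t}_N\;(\text{completion})]$. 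Once this reduction is in place, the remaining evaluations of an exponential CCDF and a Gamma MGF are entirely routine.
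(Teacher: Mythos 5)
Your proposal is correct and follows essentially the same route as the paper: reduce $\text{SINR}_e$ to $C_1X/(1+C_2Y)$ with $X$ exponential and $Y$ Gamma via unitary invariance of $\mathbf{h}_e$ against the $\sigma(\mathbf{h},\theta)$-measurable orthonormal set $\{\mathbf{t}_1,\mathbf{t}_3,\dots,\mathbf{t}_N\}$, then combine the tail-integral identity, the exponential CCDF, and the Gamma MGF. The only deviations are cosmetic --- the paper works with the $\chi^2_2$/scale-$2$ convention (hence the $e^{-T/2C_1}$ factor it states), which you correctly flag as a normalization choice, and it derives the law of $\mathbf{h}_e^H\mathbf{t}_1$ by writing it as $\alpha\,\mathbf{h}_e^H\tilde{\mathbf{a}}+\beta\,\mathbf{h}_e^H\tilde{\mathbf{h}}$ rather than by completing $\{\mathbf{t}_1,\mathbf{t}_3,\dots,\mathbf{t}_N\}$ to a unitary as you do.
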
  
We note that (\ref{final2}) can be expressed in terms of the incomplete Gamma functions; however, these representations are not computationally simpler than (\ref{final2}) itself, as it can be efficiently evaluated numerically using a single integral function in MATLAB.

\textbf{Proof:} Based on (\ref{x}), and given the received signal at the communication eavesdropper as described in Section \ref{systemmodel}, along with the orthogonality of \(\mathbf{v}_i\) and \(\mathbf{s}_u\), the SINR at the communication eavesdropper is given by:  
\begin{align}
\text{SINR}_e\!\!&\overset{(a)}{=}\frac{P\tau|c_2|^2|\mathbf{h}^H_e\mathbf{t}_1|^2}{\sigma^2+\frac{P(1-\tau)\sum_{i=3}^{N}|\mathbf{h}^H_e\mathbf{t}_i|^2}{N-2}},\label{sinr2i}
\end{align} 
We note that the matrix \(\begin{bmatrix} \mathbf{\tilde{a}} & \mathbf{\tilde{h}} & \mathbf{G} \end{bmatrix}\), where \(\mathbf{G}=\begin{bmatrix} \mathbf{t}_3 & \dots & \mathbf{t}_N \end{bmatrix}\), is unitary, and the entries of \(\mathbf{h}_e\) are i.i.d. zero-mean circularly symmetric complex normal random variables. Since the generation of \(\begin{bmatrix} \mathbf{\tilde{a}} & \mathbf{\tilde{h}} & \mathbf{G} \end{bmatrix}\) is entirely determined by the realizations of \(\mathbf{h}\) and \(\mathbf{a}\), which are independent of \(\mathbf{h}_e\) (as stated in Section \ref{systemmodel}), we conclude that \(\mathbf{h}_e\) and \(\begin{bmatrix} \mathbf{\tilde{a}} & \mathbf{\tilde{h}} & \mathbf{G} \end{bmatrix}\) are mutually independent.  
Thus, the distribution of \(\mathbf{h}^H_e\begin{bmatrix} \mathbf{\tilde{a}} & \mathbf{\tilde{h}} & \mathbf{G} \end{bmatrix}\) follows the same distribution as \(\mathbf{h}_e\), i.e., $\mathbf{h}^H_e\begin{bmatrix} \mathbf{\tilde{a}} & \mathbf{\tilde{h}} & \mathbf{G} \end{bmatrix} \sim \mathcal{CN}(0,\mathbf{I}).$ As a result, we can express $\mathbf{h}^H_e\mathbf{t}_1 = \alpha\mathbf{h}^H_e\mathbf{\tilde{a}} + \beta \mathbf{h}^H_e\mathbf{\tilde{h}},$ where \(\mathbf{h}^H_e\mathbf{\tilde{a}} \sim \mathcal{CN}(0,1)\) and \(\mathbf{h}^H_e\mathbf{\tilde{h}} \sim \mathcal{CN}(0,1)\). Since this expression is a linear combination of independent complex Gaussian random variables, it follows a complex Gaussian distribution with mean $\alpha E[\mathbf{h}^H_e\mathbf{\tilde{a}}] + \beta E[\mathbf{h}^H_e\mathbf{\tilde{h}}] = 0$ and variance $E[|\alpha\mathbf{h}^H_e\mathbf{\tilde{a}} + \beta \mathbf{h}^H_e\mathbf{\tilde{h}}|^2] = |\alpha|^2 + |\beta|^2 = 1.$ Thus, \(\mathbf{h}^H_e\mathbf{t}_1 \sim \mathcal{CN}(0,1)\), which implies \(|\mathbf{h}^H_e\mathbf{t}_1|^2 \sim \chi^2_2\).  
Furthermore, for \(i \in \{3, \dots, N\}\), we have \(\mathbf{h}^H_e\mathbf{t}_i \sim \mathcal{CN}(0,1)\), which are i.i.d. random variables. Therefore, the sum \(\sum_{i=3}^{N} |\mathbf{h}^H_e\mathbf{t}_i|^2\) follows a Chi-Square distribution with \(2(N-2)\) degrees of freedom, i.e., $\sum_{i=3}^{N} |\mathbf{h}^H_e\mathbf{t}_i|^2 \sim \chi^2_{2(N-2)}.
$ Moreover, since $\mathbf{h}^H_e\begin{bmatrix} \mathbf{\tilde{a}} & \mathbf{\tilde{h}} & \mathbf{t}_3 & \dots & \mathbf{t}_N \end{bmatrix} \sim \mathcal{CN}(0,\mathbf{I}),$ the random variables \(|\mathbf{h}^H_e\mathbf{t}_1|^2\) and \(\sum_{i=3}^{N} |\mathbf{h}^H_e\mathbf{t}_i|^2\) are independent. By defining \( C_1 = \frac{P \tau |c_2|^2}{\sigma^2}, \quad C_2 = \frac{P(1-\tau)/(N-2)}{\sigma^2} \), \( X \triangleq |\mathbf{h}^H_e \mathbf{t}_1|^2 \sim \chi^2_2 \), \( Y \triangleq \sum_{i=3}^{N} |\mathbf{h}^H_e \mathbf{t}_i|^2 \sim \chi^2_{2(N-2)} \), where \( X \) and \( Y \) are independent, we have \( \text{SINR}_e = \frac{C_1 X}{1 + C_2 Y} \). Thus:
\begin{align}
E_{X,Y} \left[ \log(1 + \text{SINR}_e) \right]&\overset{(a)}{=}\int_{0}^{\infty}\!\!\!\!P\left(X > \frac{T + TC_2Y}{C_1}\right)dt\label{simplifiedeav1}
\end{align}
where \( T = 2^t - 1 \), and (a) is due to the statistical property. Since \( X \sim \text{Exp}(1/2) \), its complementary CDF (tail probability) is: \( P(X > x) = e^{-x/2}, \quad x \geq 0 \). Thus, we get: \( P\left( X > \frac{T + TC_2Y}{C_1} | Y \right) = \exp\left( - \frac{T + TC_2Y}{2C_1} \right) \). Therefore, we have:
\begin{align}
&P\left(X > \frac{T + TC_2Y}{C_1}\right)\overset{(a)}{=}\mathbb{E}_Y\left[\exp\left(-\frac{T + TC_2Y}{2C_1}\right)\right]\nonumber\\
&=\exp\left(-\frac{T}{2C_1}\right) \cdot \mathbb{E}_Y\left[\exp\left(-\frac{TC_2Y}{2C_1}\right)\right]\nonumber\\
&\overset{(b)}{=}\exp\left(-\frac{T}{2C_1}\right)\left(1 + \frac{TC_2}{C_1}\right)^{-(N-2)}\label{simplifiedeav2}
\end{align}
(a) follows from statistical properties, and (b) holds because the term \( \mathbb{E}_Y\left[ \exp\left( - \frac{TC_2Y}{2C_1} \right) \right] \) is the moment-generating function (MGF) of \( Y \), evaluated at \( s = - \frac{TC_2}{2C_1} \). For \( Y \sim \text{Gamma}(N-2, 2) \), the MGF is given by \( M_Y(s) = \mathbb{E}_Y\left[ e^{sY} \right] = \left( 1 - \theta s \right)^{-k} \), where \( \theta = 2 \) is the scale parameter and \( k = N-2 \) is the shape parameter. Thus, replacing (\ref{simplifiedeav2}) into (\ref{simplifiedeav1}), the proof is complete.
\end{document}